\newtheorem{theorem}{Theorem}[section] 
\newtheorem{corollary}{Corollary}[section]
\newtheorem{lemma}{Lemma}[section]
\newtheorem{assumption}{Assumption}
\def\bftheta{\boldsymbol{\theta}}
\def\bfDelta{\boldsymbol{\Delta}}
\def\bfA{\boldsymbol{A}}
\def\bfB{\boldsymbol{B}}
\def\bfP{\boldsymbol{p}}
\def\bfx{\boldsymbol{x}}
\def\bfD{\boldsymbol{D}}
\def\bfDelta{\boldsymbol{\Delta}}
\def\ddefloop#1{\ifx\ddefloop#1\else\ddef{#1}\expandafter\ddefloop\fi}
\def\ddef#1{\expandafter\def\csname bb#1\endcsname{\ensuremath{\mathbb{#1}}}}
\def\ddef#1{\expandafter\def\csname c#1\endcsname{\ensuremath{\mathcal{#1}}}}
\def\ddef#1{\expandafter\def\csname v#1\endcsname{\ensuremath{\boldsymbol{#1}}}}
\def\ddef#1{\expandafter\def\csname v#1\endcsname{\ensuremath{\boldsymbol{\csname #1\endcsname}}}}
\newcommand{\nosemic}{\renewcommand{\@endalgocfline}{\relax}}
\newcommand{\dosemic}{\renewcommand{\@endalgocfline}{\algocf@endline}}
\let\oldnl\nl
\newcommand{\nonl}{\renewcommand{\nl}{\let\nl\oldnl}}
\newcommand{\ting}[1]{\textcolor{red}{Ting: #1}}
\newcommand{\yudi}[1]{\textcolor{blue}{Yudi: #1}}
\begin{document}
	
\title{Verifiable Failure Localization in Smart Grid under Cyber-Physical Attacks }

\author{Yudi Huang, \textit{Student Member, IEEE}, Ting He, \textit{Senior Member, IEEE}, Nilanjan Ray Chaudhuri, \\\textit{Senior Member, IEEE}, and Thomas La Porta \textit{Fellow, IEEE}
\thanks{The authors are with the School of Electrical Engineering and Computer Science, Pennsylvania State University, University Park, PA 16802, USA
(e-mail: \{yxh5389, tzh58, nuc88, tfl12\}@psu.edu). }
\thanks{ This work was supported by the National Science Foundation under award ECCS-1836827.}
}

\maketitle

\begin{abstract}
Cyber-physical attacks impose a significant threat to the smart grid, as the cyber attack makes it difficult to {identify the actual} damage caused by the physical attack. To defend against such attacks, various inference-based solutions have been proposed to estimate the states of grid elements (e.g., transmission lines) from measurements outside the attacked area, out of which a few have provided theoretical conditions for guaranteed accuracy. However, these conditions are usually based on the ground truth states and thus not verifiable in practice. To solve this problem, we develop (i) verifiable conditions that can be tested based on only observable information, and (ii) efficient algorithms for verifying the states of links (i.e., transmission lines) within the attacked area based on these conditions. Our numerical evaluations based on the Polish power grid {and IEEE 300-bus system} demonstrate that the proposed algorithms are highly successful in verifying the states of truly failed links, and can thus greatly help in prioritizing repairs during the recovery process. 
%
\end{abstract}
	
\begin{IEEEkeywords}
Power grid state estimation, cyber-physical attack, failure localization, verifiable condition.
\end{IEEEkeywords}

\section{Introduction} \label{sec:Intro}
	
The close interdependency between the physical subsystem (power grid) and its control subsystem (Supervisory Control and Data Acquisition - SCADA or Wide-Area Monitoring Protection and Control - WAMPAC) in modern power grids makes them vulnerable to simultaneous cyber-physical attacks that target both subsystems. Such attacks can cause devastating consequences, e.g., the attack on Ukraine's power grid left 225,000 people without power for days~\cite{UkraineAttack}, 
{as the attacker simultaneously opened circuit breakers (physical attack) while keeping the system operators unaware by jamming the phone lines and launching KillDisk server wiping (cyber attack).} 

As the main challenge in dealing with cyber-physical attacks is the difficulty in accurately identifying the damaged grid elements (e.g., failed transmission lines) due to the lack of measurements (e.g., breaker status) from the attacked area caused by the cyber attack, efforts on countering such attacks have focused on estimating the grid state inside the attacked area using power flow models and measurements outside that area.
Specifically, assuming the post-attack power injections to be known, \cite{Soltan18TCNS} developed methods to estimate the grid state under cyber-physical attacks using the \emph{direct-current (DC) power flow model}, and \cite{Soltan17PES} developed similar methods using the \emph{alternating-current (AC) power flow model}. Recently, \cite{yudi20SmartGridComm} further extends such methods to handle unknown post-attack power injections within the attacked area by proposing a linear programming (LP) based algorithm that can correctly identify all the failed links (i.e., transmission lines) under certain conditions. The conditions, however, involve the ground truth link states {(i.e., whether a link has truly failed or not)} within the attacked area and is thus not verifiable in practice.

In this work, we advance the work in \cite{yudi20SmartGridComm} by developing conditions and algorithms to verify the correctness of link states estimated by the LP-based algorithm proposed therein. 
{Besides providing more confidence in the estimated link states, such algorithms can also facilitate recovery planning after an attack, which will schedule the repairs based on the results of failure localization under resource limitation. As no current algorithm can guarantee $100\%$ localization accuracy and false alarms are costly,
it is highly desirable to verify the correctness of the failure localization results before scheduling repairs. The role of failure localization verification is shown in red in Fig.~\ref{fig:role_verification}, where the set $\hat{F}$ contains the estimated failed links. One application of the proposed method is to guide crew dispatch during line repairing/restoration.
}
\begin{figure}[tb]
\centering
\includegraphics[width=.9\linewidth]{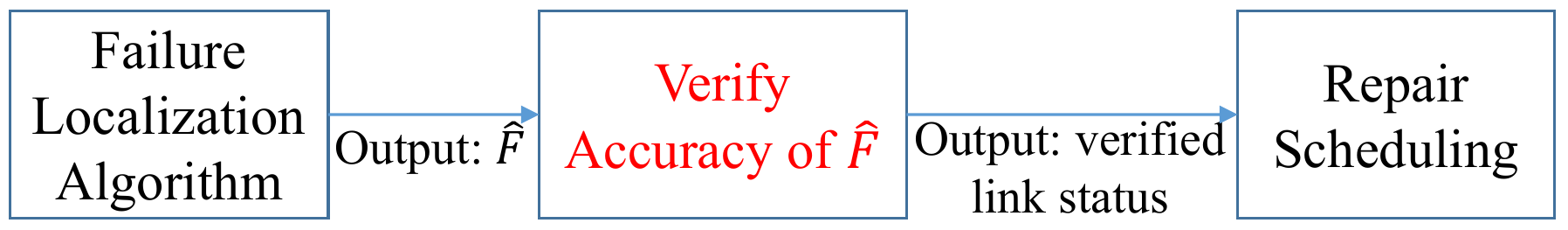}
\vspace{-.5em}
\caption{The role of failure localization verification. 
} \label{fig:role_verification}
\vspace{-1em}
\end{figure}

\subsection{Related Work}


State estimation is of fundamental importance for the supervisory control of the power grid \cite{huang2012state}. Specifically, link status identification or failed link localization is critical for post-attack failure assessment and recovery planning. To detect failed links in physical attacks, early works~\cite{tate2008line, tate2009double} formulated this problem as a mixed-integer program, which cannot scale to multi-link failures. Later works tackled this problem by formulating it as a sparse recovery problem over an overcomplete representation~\cite{Zhu12TPS, chen2014efficient}, which is then relaxed into an LP for computational efficiency, or applying machine learning techniques~\cite{garcia2015line, zhao2019learning}. 

Localizing failed links is more difficult under joint cyber-physical attacks. For cyber attacks that block sensor data to the control center as considered in this work, \cite{Soltan18TCNS} proposed an LP-based algorithm and graph-theoretic conditions for perfect failure localization under the DC power flow model. In \cite{soltan2018react}, a heuristic algorithm was proposed to handle cyber attacks that distort sensor data or inject stealthy data.  Moreover, \cite{soltan2018expose} modified the algorithm and the theoretical guarantees in \cite{Soltan18TCNS} according to the AC power flow model.  However, the above works were all based on the assumption that the power grid remained connected after the failures, which may not be true under multi-link failures~\cite{yudi20SmartGridComm}. Recently, \cite{yudi20SmartGridComm} eliminated this assumption by developing an LP-based algorithm that can jointly estimate the link states and the load shedding values within the attacked area. However, despite the empirical success of this algorithm, there is no existing method for verifying the correctness of its estimates.

{Another line of related works is fault localization, e.g., \cite{Adu01TPD,salim2009extended,Codino17TEC,saha2009fault} and references therein. These works differ from our work in the sense that they (i) target naturally-occurring faults which exhibit signatures not necessarily present during malicious attacks, (ii) mostly focus on finding the exact location of faults along a line as opposed to localizing the failed lines, and (iii) do not traditionally consider the lack of information due to cyber attacks. }
\looseness=-1

\subsection{Summary of Contributions}

We aim at estimating the states (failed/operational) of links (transmission lines) in a smart grid under a joint cyber-physical attack, where the cyber attack blocks sensor data from the attacked area and the physical attack disconnects certain links that may disconnect the grid, with the following contributions: \begin{enumerate}
    \item  
    {We provide conditions and a corresponding algorithm to verify the correctness of failure localization results (the states of links) using only observable information. Compared to previous recovery conditions in~\cite{yudi20SmartGridComm,Huang20arXiv} that cannot be tested during operation, the proposed algorithm requires no information about the ground truth link states and is thus applicable after cyber-physical attacks.}\looseness=-1 

    \item We provide a further theoretical condition for verifying the states of potentially more links based on observable information and the link states that are already verified by the above algorithm, 
    as well as the corresponding verification algorithm.

    \item We show that our conditions and algorithms can be easily adapted to incorporate the knowledge of connectivity if the post-attack grid is known to remain connected as assumed in most existing works.

    \item Our evaluations on 
    {the Polish grid and the IEEE 300-bus system} 
    show that the proposed algorithms can verify {$60$--$80\%$} of failed links and {$40$--$50\%$} of operational links in general, and these numbers increase to {$80$--$95\%$ and $70$--$90\%$} if the post-attack grid is known to remain connected, which provides valuable information for prioritizing repairs during recovery.

\end{enumerate}

\textbf{Roadmap.} Section~\ref{sec:Problem Formulation} formulates our  problem. Section~\ref{sec:Localizing Failed Links} recaps our previously proposed algorithm \cite{yudi20SmartGridComm} and its properties. In Section~\ref{sec: verification_cond}, theoretical conditions and two algorithms are developed to verify the correctness of the estimated link states. Section~\ref{sec:Performance Evaluation} evaluates the proposed algorithms on a real grid topology. Finally, Section~\ref{sec:Conclusion} concludes the paper.

\section{Problem Formulation}\label{sec:Problem Formulation}

\subsection{Power Grid Model}

We adopt the DC power flow model.
The power grid is modeled as a connected undirected graph $G=(V,E)$, where $V$ denotes the set of nodes (buses) and $E$ the set of links (transmission lines). Each link $e=(s,t)$ is associated with a \emph{reactance} $r_{st}$ ($r_{st} = r_{ts}$) and a state $\in \{\mbox{``operational''}, \mbox{``failed''}\}$ (assumed to be operational before attack). Let $\bm{\Gamma}:= \text{diag}\{\frac{1}{r_e}\}_{e\in E}$. Each node $v$ is associated with a phase angle $\theta_v$ and an active power injection $p_v$, which are coupled by DC power flow equation:
\begin{align}\label{eq:B theta = p}
    \bfB \bftheta = \bfP,
\end{align}
where $\bftheta:=(\theta_v)_{v\in V}$, $\bfP:=(p_v)_{v\in V}$, and $\bfB:=(b_{uv})_{u,v\in V} \in \mathbb{R}^{|V|\times|V|}$ is the \emph{admittance matrix}, defined as:
\begin{align}
    b_{uv} &=\left\{\begin{array}{ll}
    0 & \mbox{if }u\neq v, (u,v)\not\in E,\\
    -1/r_{uv} &  \mbox{if } u\neq v, (u,v)\in E,\\
    -\sum_{w\in V\setminus \{u\}}b_{uw} &\mbox{if }u=v.
    \end{array}\right.
\end{align}
Given an arbitrary orientation of the links, the topology of $G$ can also be represented by the \emph{incidence matrix} $\bfD\in \{-1,0,1\}^{|V|\times|E|}$, {where the entry for $u\in V$ and $e\in E$ is\looseness=-1 
\begin{align}
    D_{u,e} &= \left\{\begin{array}{ll}
    1 & \mbox{if link }e\mbox{ comes out of node }u,\\
    -1 & \mbox{if link }e\mbox{ goes into node }u,\\
    0 & \mbox{otherwise.}
    \end{array}\right.
\end{align}
}

We assume that each node is deployed with 
a phasor measurement unit (PMU) that can measure its phase angle, 
and remote terminal units (RTUs) measuring the active power injection, as well as the (breaker) states and the power flows of its incident links. These reports are sent to the control center, where 
the PMU measurements are communicated over a secure WAMPAC network~{\cite{WASA}}, and the RTU measurements over a more vulnerable SCADA network. 

\subsection{Attack Model}


\begin{figure}[tb]
\centering
\includegraphics[width=.6\linewidth]{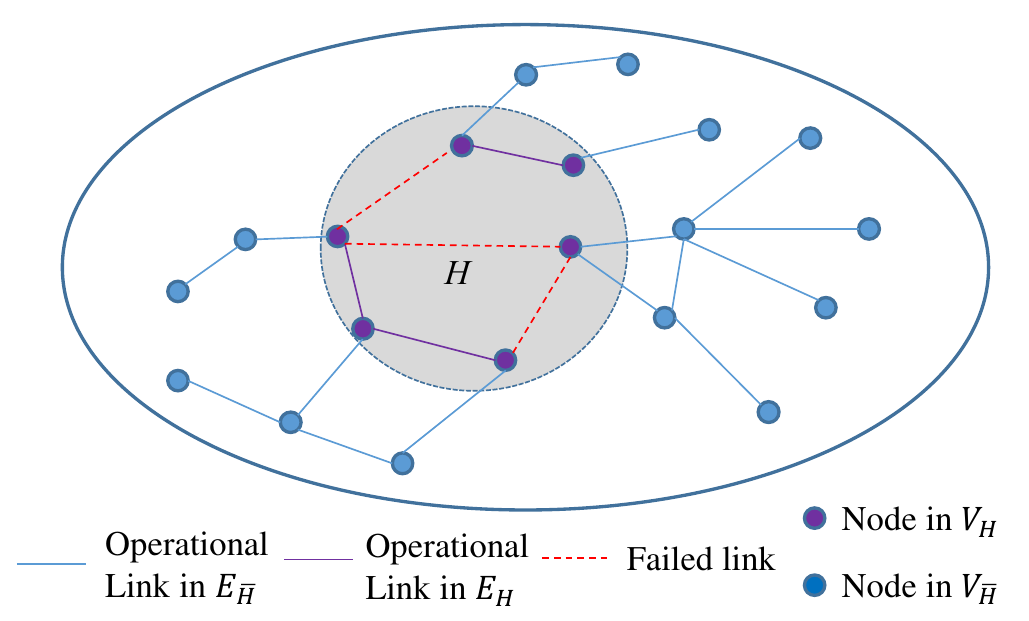}
\vspace{-1em}
\caption{A cyber-physical attack that blocks information from the attacked area $H$ while disconnecting certain links within $H$. 
} \label{fig:cyber_physical_attack_hyd}
\vspace{-.5em}
\end{figure}

As illustrated in Fig.~\ref{fig:cyber_physical_attack_hyd}, a joint cyber-physical attack on an area $H=(V_H,E_H)$ (a subgraph induced by a set of nodes $V_H\subseteq V$) comprises of: (i) cyber attack that blocks reports from the nodes in $V_H$, and (ii) physical attack that disconnects a set $F\subseteq E_H$ of links within $H$, where $E_H$ is the set of links with both endpoints in $V_H$. 
In contrast to the previous works \cite{Soltan18TCNS, Zhu12TPS, chen2014efficient}, we consider that the grid may be decomposed into islands after attack, which leads to possible changes in $\bfP$. Let $\bfDelta = (\Delta_v)_{v\in V} := \bfP-\bfP'$ denote the change in active power injections, where $\bfP'$ denotes the active power injections after the attack.  
Define
\begin{align}\label{eq:tilde{D}}
\tilde{\bfD} := \bfD \bm{\Gamma} \text{diag}\{\bfD^T \bftheta'\},
\end{align}
where $\bftheta'$ denotes post-attack phase angles.
{For link $e=(u,v)$, $\bm{\tilde{D}}_{u,e} =-\bm{\tilde{D}}_{v,e} = \frac{\theta_u'-\theta_v'}{r_{uv}}$ denotes  
the post-attack power flow on $e$ if it is operational. If link $e$ fails after attack, then $\bm{\tilde{D}}_{u,e}$ represents the ``hypothetical power flow''.}

\subsection{Failure Localization Problem}

\begin{table}[tb]
\footnotesize
\renewcommand{\arraystretch}{1.3}
\caption{Notations} \label{tab:notation}
\vspace{-.5em}
\centering
\begin{tabular}{c|l}
  \hline
  Notation & Description  \\
  \hline
 $G=(V,E)$ & power grid \\
 \hline
 $H$, $\bar{H}$ & attacked/unattacked area \\
  \hline
 $F$ & set of failed links \\
 \hline
 $\bfB$ & admittance matrix \\
 \hline
 $\bfD$ & incidence matrix \\
 \hline
 $\bftheta$ & vector of phase angles \\
 \hline
 $\bfP$ & vector of active power injections \\
 \hline
 $\bfDelta$ & vector of changes in active power injections
 \\
 \hline
 $\bm{x}$ & vector of failure indicators \\
  \hline
\end{tabular}
\end{table}
\normalsize

\textbf{Notation.} The main notations are summarized in Table~\ref{tab:notation}.
Moreover, given a subgraph $X$ of $G$, $V_X$ and $E_X$ denote the subsets of nodes/links in $X$, and $\bfx_X$ denotes the subvector of a vector $\bfx$ containing elements corresponding to $X$. Similarly, given two subgraphs $X$ and $Y$ of ${G}$, $\bfA_{X|Y}$ denotes the submatrix of a matrix $\bfA$ containing rows corresponding to $X$ and columns corresponding to $Y$. We use $[A, B]$ to denote the horizontal concatenation of matrices $A, B$ and $\bm{I}_{n}$ to denote the $n\times n$ identity matrix.
We use $\bfD_H\in \{-1,0,1\}^{|V_H|\times|E_H|}$ and $\tilde{\bfD}_H\in \mathbb{R}^{|V_H|\times|E_H|}$ to denote the submatrices of $\bfD$ and $\tilde{\bfD}$ for the attacked area $H$.
For each variable $x$, we use $x'$ to denote its value after the attack. {We follow the convention that $|x|$ indicates the absolute value if $x$ is a scalar and $|A|$ denotes the cardinality if $A$ is a set.}

\textbf{Goal.}
Our goal is to localize the failed links $F$ within the attacked area, based on knowledge before the attack and measurements from the unattacked area $\bar{H}$ after the attack. 
In contrast to \cite{yudi20SmartGridComm}, we aim at obtaining estimates with \emph{verifiable correctness}. \looseness=-1


{\textbf{Assumptions.} Our analysis and solution are based on the following assumptions:\\
1. \emph{DC power flow model:} This is an approximation of the AC power flow model by neglecting resistive losses and assuming a uniform voltage magnitude. Due to its computational efficiency, DC power flow model has been widely used for analyzing link failures in large power grids~\cite{tate2008line,tate2009double,Zhu12TPS,chen2014efficient, Soltan18TCNS,zhao2019learning,soltan2018react}. We leave the extension to the AC power flow model to future work. \\
2. \emph{Availability of phase angles:} We assume that the phase angle at every bus is available before/after the attack. Before-attack observability from PMU measurements is consistent with the goal of PMU deployment, at least in North American bulk transmission systems~\cite{PMUdeployment}. Under the North American SynchroPhasor Initiative (NASPI) \cite{dagle2010north}, the number of PMUs is steadily growing, and some utilities have already achieved full observability in their networks, e.g., Dominion Power has piloted the PMU-based linear state estimator \cite{jones2013three,jones2014methodology}. These trends indicate that it is just a matter of time that complete observability through PMUs is achieved.
%
The post-attack observability can be achieved by securing PMU measurements through the stronger cyber security requirements of WAMPAC~\cite{WAMPACsecurity}, or through inference when $B_{\bar{H}|H}$ has a full column rank~\cite{yudi20SmartGridComm}.  \\
3. \emph{$\theta_s' \ne \theta_t'$ for each link $(s,t)\in E_H$:} This assumption simply means that we only focus on the states of links in $H$ that will carry power flow if not failed, as the states of links carrying no flow have no impact and thus cannot be identified~\cite{Soltan18TCNS, yudi20SmartGridComm}.
}

\section{Estimating Link States}\label{sec:Localizing Failed Links}

To our knowledge, the only algorithm for estimating link states (and hence localizing failed links) under a cyber-physical attack that can disconnect the grid is an algorithm called \emph{Failed Link Detection (FLD)} proposed in \cite{yudi20SmartGridComm}. FLD has exhibited very good accuracy in detecting the failed links with very few false alarms~\cite{yudi20SmartGridComm}. Our idea is to develop algorithms to verify the output of FLD. In this section, we briefly recap FLD and its existing (unverifiable) recovery conditions for completeness. \looseness=-1


\subsection{Existing Algorithm}

Let $\bfx_H \in \{0,1\}^{|E_H| }$ be an indicator vector such that $x_e=1$ if $e\in F$ and $x_e=0$ if $e \in E_H\setminus F$. It has been shown in \cite{yudi20SmartGridComm} that any feasible solution to $\bfx_H$ and $\bfDelta_H$ must satisfy
\begin{align}
&    \bfDelta_H = \bfB_{H|G}(\bftheta-\bftheta') + \bfD_{H}\bm{\Gamma}_H\text{diag}\{\bfD_{G|H}^T \bftheta'\}\vx_H, \label{eq:pf_constraint}\\
& p_v \ge {\Delta_v} \ge 0, ~~~\forall v \in \left\{ {u\: |u \in V_H, p_u > 0} \right\}, \label{eq:const_valid_start}\\
& p_v \le {\Delta_v} \le 0,~~~\forall v \in \left\{ {u\: |u \in V_H, p_u \le 0} \right\}, \label{eq:const_valid_load}
\end{align}
FLD formulates the problem of failure localization as an LP:
\begin{subequations}\label{eq1:L1_binary_load}
\begin{alignat}{2}
(\text{P1}) \quad &\min_{\boldsymbol{x}_H, \bfDelta_H} \Arrowvert \bfx_H \Arrowvert_1 &\\
\mbox{s.t.} \quad
&\eqref{eq:pf_constraint}, \eqref{eq:const_valid_start}, \eqref{eq:const_valid_load},&\\
&{\rm{       }}{\bf{0}} \le {{\bfx}_H} \le {\bf{1}}.&
\end{alignat}
\end{subequations}
which is the convex relaxation of a sparse-recovery-based formulation. After solving (P1) in polynomial time, FLD estimates the set of failed links as 
\begin{align}\label{eq:F_Hat}
    \hat{F}=\{e:x_e \geq \eta\},
\end{align}
where $\eta\in (0, 1)$ is a threshold for rounding the factional solution of $\bfx_H$ to an integral solution ($\eta=0.5$ in this paper). 

\subsection{Existing Recovery Conditions}\label{subsec:Existing Recovery Conditions}

FLD is known to recover the link states correctly under the following conditions~\cite{Huang20arXiv} (which improved the conditions in \cite{yudi20SmartGridComm}), where $\bfx^*_H$ and $\bfDelta^*_H$ denote the true values of $\bfx_H$ and $\bfDelta_H$.  \looseness=-1

\subsubsection{Implicit Conditions}

Denote $V_{L} \subseteq V_H$ as the set containing nodes with $p_v\le 0$, and $V_G := V_H\setminus V_L$ as the remaining nodes in $V_H$ (with $p_v>0$). Accordingly, $\bfDelta_i$ and $\bfP_i$ ($i=L, G$) denote the subvectors of $\bfDelta_H$ and $\bfP_H$, respectively, corresponding to $V_i$, and $\tilde{\bfD}_i$ denotes the submatrix of $\tilde{\bfD}_H$ containing the rows corresponding to $V_i$. Given a set $Q_m := F\setminus \hat{F}$ of failed links that are missed and a set $Q_f := \hat{F}\setminus F$ of operational links that are falsely detected,
define $\bm{W}_m \in \{0,1\}^{|Q_m|\times |E_H|}$ as a binary matrix where  $(W_m)_{i,j} = 1$ indicates the $i$-th missed link to be $e_j$, and  define $\bm{W}_f \in \{0,1\}^{|Q_f|\times |E_H|}$ similarly such that $(W_f)_{i,k} = 1$ if the $i$-th false-alarmed link is $e_k$. Based on these notions, define\looseness=-1
\begin{subequations}\label{eq:inter_sub}
\begin{alignat}{2}
\bm{A}_D^T &:= [\tilde{\bfD}^T_{L}, -\tilde{\bfD}^T_{L}, -\tilde{\bfD}^T_{G}, \tilde{\bfD}^T_{G}] \in \mathbb{R}^{|E_H|\times 2|V_H|}, \\
\bm{A}_x^T &:= [-\bm{I}_{|E_H|}, \bm{I}_{|E_H|}]\in \mathbb{R}^{|E_H|\times 2|E_H|},\\
\bm{W}^T &:= [\bm{W}^T_m, -\bm{W}^T_f]\in \mathbb{R}^{|E_H|\times (|Q_m|+|Q_f|)}, \label{eq:bmW}\\
\bm{g}_D^T&:=[-(\bfDelta_{L}^*)^T, (-\bfP_{L}')^T, (\bfDelta_{G}^*)^T, (\bfP_{G}')^T],\\
\bm{g}_x^T &:= [(\bfx^*_H)^T, \bm{1}^T-(\bfx^*_H)^T]\in \mathbb{R}^{1\times 2|E_H|},\\
\bm{g}_w^T &:= [(\eta-1)\bm{1}^T, -\eta\bm{1}^T]\in \mathbb{R}^{1\times (|Q_m|+|Q_f|)}.
\end{alignat}
\end{subequations}
Then, the correctness of FLD is guaranteed as follows.

\begin{lemma}[\cite{Huang20arXiv}]\label{lem:ground_alter_gale}
A link $e\in F$ cannot be missed ($e\notin F \setminus \hat{F}$) by FLD  if for any $Q_m$ containing $e$, there is a solution $\bm{z}\ge \bm{0}$ to \looseness=-1
\begin{subequations}\label{eq:alter_gale}
\begin{alignat}{2}
[\bm{A}_D^T, \bm{A}_x^T, \bm{W}^T, \bm{1}]\bm{z} = \bm{0},\label{eq:alter_gale_eq} \\
[\bm{g}_D^T, \bm{g}_x^T, \bm{g}_w^T, \bm{0}]\bm{z} < 0.\label{eq:alter_gale_ineq}
\end{alignat}
\end{subequations}
Similarly, a link $e\in E_H \setminus F$ cannot be falsely detected as failed if
for any $Q_f$ with $e \in Q_f$, 
there is a solution $\bm{z}\ge\bm{0}$ to \eqref{eq:alter_gale}.\looseness=-1 
\end{lemma}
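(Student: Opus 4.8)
The plan is to recognize \eqref{eq:alter_gale} as the Gale alternative of a primal linear feasibility system that encodes the event ``FLD admits an optimal solution that misclassifies $e$,'' so that solvability of \eqref{eq:alter_gale} becomes a certificate that this event is impossible.

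First I would pin down the failure event. A link $e\in F$ is missed exactly when (P1) has an optimal solution $\bfx_H$ with $x_e<\eta$. Since the ground truth $\bfx^*_H$ is feasible for (P1), the optimal value is at most $\|\bfx^*_H\|_1$, so any such solution in particular satisfies the constraints of (P1), obeys $\bm{1}^T\bfx_H\le\bm{1}^T\bfx^*_H$, and puts every link of some missed set $Q_m\ni e$ (and of the induced false-alarm set $Q_f$) on the wrong side of $\eta$. Because $\bm{0}\le\bfx_H\le\bm{1}$, the objective is linear, $\|\bfx_H\|_1=\bm{1}^T\bfx_H$, so all these requirements are linear in $\bfx_H$. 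Hence $e$ being missed implies feasibility of a linear system, and proving that system infeasible for every admissible $(Q_m,Q_f)$ with $e\in Q_m$ shows $e$ can never be missed.

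The crucial manipulation is to eliminate $\bfDelta_H$ and then change variables. Using \eqref{eq:pf_constraint} I would substitute $\bfDelta_H=\bfB_{H|G}(\bftheta-\bftheta')+\tilde{\bfD}_H\bfx_H$ into \eqref{eq:const_valid_start}--\eqref{eq:const_valid_load}, turning the four injection bounds into inequalities in $\bfx_H$ whose coefficient rows are $\tilde{\bfD}_L,-\tilde{\bfD}_L,-\tilde{\bfD}_G,\tilde{\bfD}_G$. I would then set $\bm{y}:=\bfx_H-\bfx^*_H$. Since the true values satisfy $\bfDelta^*_H=\bfB_{H|G}(\bftheta-\bftheta')+\tilde{\bfD}_H\bfx^*_H$ and $\bfP'=\bfP-\bfDelta$, every constant right-hand side collapses onto the blocks of $\bm{g}_D$ (the injection bounds give $-\bfDelta^*_L,-\bfP'_L,\bfDelta^*_G,\bfP'_G$), of $\bm{g}_x$ (the box $\bm{0}\le\bfx_H\le\bm{1}$ gives $\bfx^*_H$ and $\bm{1}-\bfx^*_H$), of $\bm{g}_w$ (a missed link has $x^*_e=1$, so $x_e<\eta$ becomes $y_e\le\eta-1$, while a false alarm has $x^*_e=0$, so $x_e\ge\eta$ becomes $-y_e\le-\eta$), and the objective bound $\bm{1}^T\bfx_H\le\bm{1}^T\bfx^*_H$ becomes $\bm{1}^T\bm{y}\le0$. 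Collecting these, the feasibility system reads $\bm{M}\bm{y}\le\bm{b}$ with $\bm{M}^T=[\bm{A}_D^T,\bm{A}_x^T,\bm{W}^T,\bm{1}]$ and $\bm{b}=(\bm{g}_D,\bm{g}_x,\bm{g}_w,0)$, exactly the data in \eqref{eq:alter_gale}.

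Finally I would apply Gale's theorem of the alternative: $\bm{M}\bm{y}\le\bm{b}$ is infeasible if and only if there is a $\bm{z}\ge\bm{0}$ with $\bm{M}^T\bm{z}=\bm{0}$ and $\bm{b}^T\bm{z}<0$, which is precisely \eqref{eq:alter_gale_eq}--\eqref{eq:alter_gale_ineq}. Thus a nonnegative solution of \eqref{eq:alter_gale} certifies infeasibility of the non-strict primal, hence a fortiori of its strict version with $x_e<\eta$; ranging over all $Q_m\ni e$ then rules out every way of missing $e$, and the false-detection claim follows from the symmetric reading with $e\in Q_f$. The main obstacle I anticipate is the bookkeeping in the previous paragraph: verifying that, after the shift $\bm{y}=\bfx_H-\bfx^*_H$, each constant reduces to exactly the stated block of $\bm{g}_D,\bm{g}_x,\bm{g}_w$ (this is where the identities $\bfDelta^*_H=\bfB_{H|G}(\bftheta-\bftheta')+\tilde{\bfD}_H\bfx^*_H$ and $\bfP'=\bfP-\bfDelta$ are essential), together with confirming that replacing the strict threshold inequalities by non-strict ones is harmless, since solvability of \eqref{eq:alter_gale} already refutes the weaker non-strict system.
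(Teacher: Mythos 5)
Your proposal is correct and takes essentially the intended route: you encode the event ``an optimizer of (P1) misclassifies $e$'' as a linear feasibility system $\bm{M}\bm{y}\le\bm{b}$ in the shifted variable $\bm{y}=\bfx_H-\bfx_H^*$ (with row blocks $\bm{A}_D,\bm{A}_x,\bm{W},\bm{1}^T$ and right-hand side $\bm{g}_D,\bm{g}_x,\bm{g}_w,0$) and certify its infeasibility by Gale's theorem of the alternative, which is precisely the mechanism behind \eqref{eq:alter_gale}. Note that the paper itself imports Lemma~\ref{lem:ground_alter_gale} from \cite{Huang20arXiv} without reproving it, but its own machinery in Section~\ref{sec: verification_cond} (the parameterization $\vDelta_H=\vDelta_H^*+\tilde{\vD}_H\vc$, $\vx_H=\vx_H^*+\vc$, and the remark that $\vA_D,\vg_D$ model \eqref{eq:const_valid_start}--\eqref{eq:const_valid_load}) corroborates each of your block identifications, including the harmless strict-to-nonstrict relaxation of the threshold inequalities.
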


\subsubsection{Explicit Conditions}

Besides Lemma~\ref{lem:ground_alter_gale}, \cite{Huang20arXiv} also provided more explicit conditions in terms of post-attack power flows and power injections. The following definitions will be needed to present this result.
Let $\bm{z}_D\in \mathbb{R}^{2|V_{H}|}, \bm{z}_x\in \mathbb{R}^{2|E_{H}|}$, $\bm{z}_w \in \mathbb{R}^{|Q_m|+|Q_f|}$ and $z_*\in \mathbb{R}$ denote subvectors of $\bm{z}$ corresponding to $\bm{A}_D^T, \bm{A}_x^T, \bm{W}^T$, and $\bm{1}$ in \eqref{eq:alter_gale_eq}. Denote $\tilde{\bfD}_{u}$ as the row in $\tilde{\bfD}$ corresponding to node $u$, and $\tilde{D}_{u,e}$ as the entry in $\tilde{\bfD}_{u}$ corresponding to link $e$. Denote $z_{D,u}$ as the entry in $\bm{z}_D$ corresponding to $\tilde{\bfD}_{u}$ in $\bm{A}_D$ and $z_{D,-u}$ as the entry corresponding to  $-\tilde{\bfD}_{u}$ in $\bm{A}_D$.
Define $g_{D,u}$ and $g_{D,-u}$ as the entries in $\bm{g}_D$ corresponding to $z_{D,u}$ and $z_{D,-u}$, respectively, i.e.,\looseness=-1
\begin{align}
&g_{D,u}\hspace{-.25em} := \hspace{-.25em}\left\{\hspace{-.25em}\begin{array}{ll}
-\Delta_u^* & \hspace{-.5em}\mbox{if } p_u \hspace{-.25em}\le\hspace{-.25em} 0,\\
p_u' & \hspace{-.5em}\mbox{if } p_u\hspace{-.25em}>\hspace{-.25em} 0,
\end{array}\right.
&\hspace{-1em} g_{D,-u} \hspace{-.25em}:=\hspace{-.25em}  \left\{\hspace{-.25em}\begin{array}{ll}
-p_u' & \hspace{-.5em}\mbox{if } p_u \hspace{-.25em}\le\hspace{-.25em} 0,\\
\Delta_u^* & \hspace{-.5em}\mbox{if } p_u \hspace{-.25em}>\hspace{-.25em} 0.
\end{array}\right.
\end{align}
Moreover, if link $e$ is the $i^{th}$ link in $Q_m$, then $z_{w,m,e}$ is used to denote the entry in $\bm{z}_{w}$ that corresponds to the $i^{th}$ column of $\bm{W}_m^T$; $z_{w,f,e}$ is defined similarly if $e\in Q_f$. For each link $e$, we denote $z_{x-,e}$ as the entry in $\bm{z}_x$ corresponding to $x^*_e$ in $\bm{g}_x$ and $z_{x+,e}$ as the entry corresponding to $(1-x^*_e)$ in $\bm{g}_x$.\looseness=-1

Referring to a set of nodes $U \subseteq V_H$ that induce a connected subgraph before attack as a \emph{hyper-node}, 
%
\cite{Huang20arXiv} established recovery conditions based on the following attributes of hyper-nodes. Define $E_U$ as the set of links in $H$ with exactly one endpoint in $U$, i.e, $E_U := \{e|e=(s,t)\in E_H, s\in U, t\notin U\}$. If $E_U\cap F \neq \emptyset$, define:
\begin{subequations}\label{eq:properties of hyper-node}
\begin{align}
\tilde{D}_{U,e} &:= \sum_{u\in U}\tilde{D}_{u,e},\\
S_U &:= \{ e\in E_U\setminus F|\: \exists l \in E_U \cap F,  \tilde{D}_{U,l}\tilde{D}_{U,e} > 0\}, \\
f_{U,g} &:= \hspace{-.25em}
\begin{cases}
{\sum_{u\in U}{g_{D,u}}\mbox{\ \ \ if } {\exists l \in {E_U} \cap F, {\tilde D}_{U,l} < 0} },\\
\sum_{u\in U}{g_{D, - u}} \mbox{\ otherwise.}\\
\end{cases}\label{eq:def_fug}
\end{align}
\end{subequations}
{An illustrative example of hyper-node is $U=\{u_1, u_2, u_3\}$ in Fig.~\ref{fig:eg_hyper_node}, where $E_U=\{l_2, l_4, l_6, l_7\}$.} If $E_U\cap F = \emptyset$, we define:
\begin{align}
{f_{U,g}} := \left\{ {\begin{array}{*{20}{ll}}
{\sum_{u\in U}{g_{D,u}}\mbox{\ \ \ if } {\exists l \in {E_U} \setminus F, {\tilde D}_{U,l} > 0} },\\
\sum_{u\in U}{g_{D, - u}} \mbox{\ otherwise.} 
\end{array}} \right.
\end{align}


\begin{theorem}[\cite{Huang20arXiv}]\label{lem:no_Miss_hyper_node}
A failed link $l \in F$ will be detected by FLD, i.e., $l\in \hat{F}$, if there exists at least one hyper-node (say $U$) such that $l\in E_U$, for which the following conditions hold:\looseness=-1
\begin{enumerate}
    \item $\forall e, l\in E_U\cap F$, $\tilde{D}_{U,e}\tilde{D}_{U,l} > 0$,
    \item $S_U = \emptyset$, and
    \item $f_{U,g} + (\eta-1)|\tilde{D}_{U,l}|<0$.
\end{enumerate}
\end{theorem}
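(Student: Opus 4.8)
The plan is to derive Theorem~\ref{lem:no_Miss_hyper_node} from the implicit criterion in Lemma~\ref{lem:ground_alter_gale}: it suffices to show that for every $Q_m$ containing $l$ (and every compatible $Q_f$), the Gale system \eqref{eq:alter_gale} admits a nonnegative solution $\bm{z}$. I would build such a certificate explicitly, supported entirely on the hyper-node $U$. Assume without loss of generality that $\tilde{D}_{U,l}<0$ (the case $\tilde{D}_{U,l}>0$ is symmetric, swapping the roles of $z_{D,u}$ and $z_{D,-u}$); by Condition~1 every $e\in E_U\cap F$ then also has $\tilde{D}_{U,e}<0$. Set $z_{D,u}=1$ for all $u\in U$, and set every other entry of $\bm{z}_D$, as well as $z_*$, equal to $0$. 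Because $\tilde{D}_{U,e}=\sum_{u\in U}\tilde{D}_{u,e}$ telescopes to $0$ on links internal to $U$ and on links avoiding $U$, the $\bm{A}_D^T$-part of \eqref{eq:alter_gale_eq} contributes exactly $\tilde{D}_{U,e}$ to the row of each link $e$, and this is nonzero only for $e\in E_U$.

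The second step is to pick the slack entries $z_{x\pm,e}$ and $z_{w,\cdot,e}$ so that every row of \eqref{eq:alter_gale_eq} is met with $\bm{z}\ge\bm{0}$. For $e\notin E_U$, set all associated entries to $0$, so the row reads $0=0$. For the target link $l$ and any other failed link $e\in E_U\cap F\cap Q_m$, I would absorb the term $\tilde{D}_{U,e}<0$ into $z_{w,m,e}=-\tilde{D}_{U,e}=|\tilde{D}_{U,e}|>0$. For a failed link $e\in E_U\cap F$ that is \emph{not} missed, I would instead use $z_{x+,e}=|\tilde{D}_{U,e}|$; the point is that its objective coefficient $1-x^*_e$ vanishes since $x^*_e=1$. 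For an operational link $e\in E_U\setminus F$, Condition~2 ($S_U=\emptyset$) forces $\tilde{D}_{U,e}\ge 0$, so I can balance its row with $z_{x-,e}=\tilde{D}_{U,e}\ge 0$, whose objective coefficient $x^*_e=0$ again vanishes. All $z_{w,f,\cdot}$ are left at $0$, which is why $Q_f$ never interferes. A short check confirms each row of \eqref{eq:alter_gale_eq} holds and $\bm{z}\ge\bm{0}$.

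Finally I would evaluate the left-hand side of \eqref{eq:alter_gale_ineq}. The $\bm{g}_D$-part equals $\sum_{u\in U}g_{D,u}=f_{U,g}$ (matching the first branch of \eqref{eq:def_fug} under $\tilde{D}_{U,l}<0$); the $\bm{g}_x$-part is $0$ by the coefficient-vanishing observation above; and the $\bm{g}_w$-part equals $\sum_{e\in E_U\cap F\cap Q_m}(\eta-1)|\tilde{D}_{U,e}|$, which contains the term $(\eta-1)|\tilde{D}_{U,l}|$ coming from $l$ and is otherwise $\le 0$ since $\eta<1$. Hence the total is at most $f_{U,g}+(\eta-1)|\tilde{D}_{U,l}|$, which is strictly negative by Condition~3. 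This produces the required $\bm{z}$ for the arbitrarily chosen $Q_m\ni l$, so by Lemma~\ref{lem:ground_alter_gale} the link $l$ cannot be missed, i.e.\ $l\in\hat{F}$. The main obstacle is the sign/slack bookkeeping of the middle step: the whole argument hinges on arranging that every link other than the missed ones contributes a zero objective coefficient, which is exactly what Conditions~1 and~2 (same sign on $E_U\cap F$, opposite sign on $E_U\setminus F$) together with the integrality $x^*_e\in\{0,1\}$ make possible; getting the direction of each inequality right and confirming that the construction is uniform over all admissible $Q_m$ and $Q_f$ is the delicate part.
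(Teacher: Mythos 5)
Your proof is correct, and it is essentially the intended argument: the paper itself states this theorem without proof (importing it from \cite{Huang20arXiv}), but your certificate construction --- setting $z_{D,u}=1$ (resp.\ $z_{D,-u}=1$) uniformly on $U$ so that only rows of $E_U$ receive the contribution $\tilde{D}_{U,e}$, absorbing the missed links' terms into $z_{w,m,e}=|\tilde{D}_{U,e}|$, and balancing the remaining links of $E_U$ with $z_{x+,e}$ or $z_{x-,e}$ whose objective coefficients $1-x_e^*$ or $x_e^*$ vanish by Conditions~1--2 --- is exactly the technique this paper deploys in its own proofs of Theorems~\ref{lem:certify_Ef_0}--\ref{lem:certify_Ef_2} from Lemma~\ref{lem:ground_alter_gale}. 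Your sign bookkeeping (operational links in $E_U$ forced to satisfy $\tilde{D}_{U,e}\ge 0$ when $\tilde{D}_{U,l}<0$, so $z_{x-,e}=\tilde{D}_{U,e}\ge 0$), the uniformity over all admissible $Q_m$ and $Q_f$ (all $z_{w,f,\cdot}=0$), and the final bound $f_{U,g}+(\eta-1)|\tilde{D}_{U,l}|<0$ all check out.
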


\begin{theorem}[\cite{Huang20arXiv}]\label{lem:no_fa_hyper_direc}
An operational link $l \in E_H\setminus F$ will not be detected as failed by FLD, i.e., $l\notin \hat{F}$, if there exists at least one hyper-node (say $U$) such that $l\in E_U$, for which the following conditions hold:\looseness=0
\begin{enumerate}
    \item 
    $\forall l, l'\in E_U\setminus F,\: \tilde{D}_{U,l}\tilde{D}_{U,l'} > 0$,
    \item $S_U = \emptyset$ if $E_U\cap F \ne \emptyset$, and
    \item $f_{U,g}-\eta|\tilde{D}_{U,l}|<0$.
\end{enumerate}
\end{theorem}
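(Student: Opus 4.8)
The plan is to prove the statement exactly as one would its no-miss counterpart (Theorem~\ref{lem:no_Miss_hyper_node}): by exhibiting a Gale-alternative certificate. By Lemma~\ref{lem:ground_alter_gale} it suffices to show that for \emph{every} $Q_f$ containing $l$ there exists $\bm{z}\ge\bm{0}$ solving the equality \eqref{eq:alter_gale_eq} together with the strict inequality \eqref{eq:alter_gale_ineq}, and I would build one such $\bm{z}$ explicitly from the hyper-node $U$. Reading off the per-link rows of \eqref{eq:alter_gale_eq}, the coordinate attached to a link $e$ is $\sum_{u\in V_H}\tilde{D}_{u,e}(z_{D,u}-z_{D,-u})-z_{x-,e}+z_{x+,e}+(\text{indicator term})+z_*=0$, where the indicator term is $+z_{w,m,e}$ if $e\in Q_m$, $-z_{w,f,e}$ if $e\in Q_f$, and absent otherwise.

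First I would normalize signs. By Condition~1 all $\tilde{D}_{U,e}$ with $e\in E_U\setminus F$ share one sign $\sigma$; I treat $\sigma=+$, the case $\sigma=-$ being symmetric after swapping the roles of $z_{D,u}$ and $z_{D,-u}$. I then \emph{activate the hyper-node}: set $z_{D,u}=1$, $z_{D,-u}=0$ for each $u\in U$ and leave all other entries of $\bm{z}_D$ zero. Since $\tilde{D}_{u,e}=-\tilde{D}_{v,e}$ on a link $e=(u,v)$, the flow term $\sum_{u\in V_H}\tilde{D}_{u,e}(z_{D,u}-z_{D,-u})$ collapses to $\tilde{D}_{U,e}$, which is nonzero only for $e\in E_U$; moreover the dual-objective contribution of this block is exactly $\sum_{u\in U}g_{D,u}$, which equals $f_{U,g}$ by its definition in the $\sigma=+$ branch. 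Here the two branches in the definition of $f_{U,g}$ (according to whether $E_U\cap F=\emptyset$) line up, using that, by Conditions~1--2, every failed link in $E_U$ carries the opposite sign $-\sigma$.

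Next I would balance each per-link equality with the cheapest available variable, so that \eqref{eq:alter_gale_eq} holds with $\bm{z}\ge\bm{0}$ while no positive dual cost accumulates. Set $z_*=0$ and all $z_{w,m,e}=0$. For the target link $l$ (operational, $l\in Q_f$) I cancel its flow term $\tilde{D}_{U,l}$ by the indicator $z_{w,f,l}=|\tilde{D}_{U,l}|\ge 0$, which is the only term contributing $-\eta|\tilde{D}_{U,l}|$ to \eqref{eq:alter_gale_ineq}. For every other operational link $e\in E_U\setminus\{l\}$ (flow sign $+\sigma$) I cancel $\tilde{D}_{U,e}$ by the slack $z_{x-,e}$; since $x^*_e=0$ this slack multiplies $x^*_e$ in $\bm{g}_x$ and costs nothing. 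For a failed link $e\in E_U\cap F$, Condition~2 ($S_U=\emptyset$) forces flow sign $-\sigma$, so I cancel $\tilde{D}_{U,e}$ by the slack $z_{x+,e}$, which multiplies $1-x^*_e=0$ and is again free. Links outside $E_U$ have zero flow term and take all-zero variables. Summing the contributions, the left-hand side of \eqref{eq:alter_gale_ineq} reduces to $f_{U,g}-\eta|\tilde{D}_{U,l}|$, which is negative by Condition~3.

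The routine part is the per-coordinate verification; \textbf{the crux is the sign bookkeeping in the balancing step}. The whole construction hinges on cancelling the flow term on each link of $E_U$ through the slack direction that is \emph{free} in $\bm{g}_x$ (the one multiplied by the vanishing component of $(x^*_e,\,1-x^*_e)$), and this is possible only because Condition~1 pins a single sign on all operational links while Condition~2 guarantees every failed link in $E_U$ takes the opposite sign. If $S_U\ne\emptyset$, some failed link could share $l$'s flow sign, forcing its cancellation through the \emph{costly} slack and injecting a positive term into \eqref{eq:alter_gale_ineq} that Condition~3's budget cannot absorb. Carrying this sign consistency cleanly across the two sub-cases $E_U\cap F=\emptyset$ and $E_U\cap F\ne\emptyset$ is the main obstacle, and it is precisely where all three conditions are simultaneously used.
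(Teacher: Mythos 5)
Your proof is correct: the paper itself states this theorem without proof (importing it from \cite{Huang20arXiv}), but your certificate construction---activating the hyper-node block of $\bm{z}_D$ according to the common sign $\sigma$ so the flow term on each row collapses to $\pm\tilde{D}_{U,e}$ and the cost to $f_{U,g}$, cancelling the target link via $z_{w,f,l}=|\tilde{D}_{U,l}|$ at cost $-\eta|\tilde{D}_{U,l}|$, and absorbing every other link of $E_U$ through the cost-free slack ($z_{x-,e}$ when $x_e^*=0$, $z_{x+,e}$ when $x_e^*=1$, available precisely because Conditions 1--2 force failed links in $E_U$ to carry sign $-\sigma$)---is exactly the mechanism the paper deploys in its own proofs of Theorems~\ref{lem:certify_Ef_0}--\ref{lem:certify_Ef_2}, specialized there to $|E_U|=2$. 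Your handling of the remaining details is also sound: zeroing the unused $z_{w}$ entries makes the certificate valid for arbitrary $(Q_m,Q_f)$ with $l\in Q_f$ as Lemma~\ref{lem:ground_alter_gale} requires, and the two branches of the definition of $f_{U,g}$ indeed coincide with the activated block ($\sum_{u\in U}g_{D,u}$ for $\sigma=+$, $\sum_{u\in U}g_{D,-u}$ for $\sigma=-$) in both sub-cases $E_U\cap F=\emptyset$ and $E_U\cap F\neq\emptyset$.
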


While useful for performance analysis, the above conditions cannot be directly applied to verify whether the estimated state of a link is correct or not as the ground truth $F$ is unknown.

\section{Verifying Estimated Link States}\label{sec: verification_cond}

We will show that in some cases, we can guarantee the correctness of estimated link states based on observable information. Our idea is to (1) derive stronger recovery conditions that can be tested without knowledge of the ground truth {link states}, and then (2) extend these conditions to test more links based on the link states verified in step~(1).

Our results are based on the assumption that the grid follows the \emph{proportional load shedding/generation reduction policy}, where (i) either the load or the generation (but not both) will be reduced upon the formation of an island, and (ii) if nodes $u$ and $v$ are in the same island and of the same type (both load or generator), then $p_u'/p_u = p_v'/p_v$. This policy models the common practice in adjusting load/generation due to islanding~\cite{pal2006robust,lu2016under}.  Under this policy,  it is known that the post-attack power injections can be recovered under the following condition.

\begin{lemma}[\cite{yudi20SmartGridComm}]\label{lem:recover delta}
Let $N(v;\bar{H})$ denote the set of all the nodes in $\bar{H}$ that are connected to node $v$ via links in $E\setminus E_H$.
Then under the proportional load shedding/generation reduction policy, $\Delta_v$ for $v\in V_H$ can be  recovered unless $N(v;\bar{H})=\emptyset$ or every $u\in N(v;\bar{H})$ is of a different type from $v$ with $\Delta_u=0$.
\end{lemma}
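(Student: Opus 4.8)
\emph{Proof proposal.} The plan is to give a \emph{constructive} recovery procedure for $\Delta_v$ that uses only quantities observable after the attack: the pre-attack injections $p_u$ (known for all $u\in V$), the post-attack injections $p_u'$ (hence $\Delta_u$) together with the node type for every $u\in V_{\bar{H}}$, and the surviving topology of the links in $E\setminus E_H$ (which the physical attack, being restricted to $F\subseteq E_H$, leaves intact). The key structural observation is that every $u\in N(v;\bar{H})$ is joined to $v$ by a path lying entirely in $E\setminus E_H$; since none of those links is removed, $u$ and $v$ must belong to the \emph{same} post-attack island, and this is the only co-island information about $v$ that can be \emph{certified} from the observations, because any path through $E_H$ could traverse a link whose state is exactly what we are trying to determine.

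First I would use the proportional policy to reduce the task to identifying, for the island $I$ containing $v$, (a) which of the two types (load or generator) had its injections scaled, and (b) the corresponding scaling factor $c_I$. Under the policy every node $u\in I$ of the reduced type satisfies $p_u'=c_I\,p_u$ for a common $c_I$, while every node of the other type is unchanged, so $\Delta_u=0$. Consequently $\Delta_v=(1-c_I)\,p_v$ if $v$ is of the reduced type and $\Delta_v=0$ otherwise, and it suffices to read off the reduced type and $c_I$ from some anchor in $N(v;\bar{H})$.

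The recovery then follows from a short case analysis on an anchor $u\in N(v;\bar{H})$, which exists precisely when $N(v;\bar{H})\neq\emptyset$. If some $u$ has $\Delta_u\neq 0$, then $u$ must be of the reduced type (so $p_u\neq 0$ and $p_u'/p_u$ is well defined): if this type equals that of $v$ we output $\Delta_v=(1-p_u'/p_u)\,p_v$, and if it differs we conclude that $v$'s type is unchanged and output $\Delta_v=0$. If instead every anchor has $\Delta_u=0$ but at least one is of $v$'s own type with nonzero injection, that anchor certifies $c_I=1$ for $v$'s type, again giving $\Delta_v=0$. Hence $\Delta_v$ is recovered whenever $N(v;\bar{H})\neq\emptyset$ and some $u\in N(v;\bar{H})$ is either of the same type as $v$ or has $\Delta_u\neq 0$, which is exactly the complement of the stated exception.

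I expect the delicate part to be arguing \emph{tightness}, i.e.\ that the excluded case is genuinely unrecoverable: when $N(v;\bar{H})=\emptyset$, or when every anchor is of the opposite type with $\Delta_u=0$, the observations are consistent both with ``$v$'s type unchanged'' and with ``$v$'s type scaled by an arbitrary $c_I<1$'', so no function of the observable data can pin down $\Delta_v$. Making this rigorous will require exhibiting two admissible post-attack states (injections and phase angles jointly satisfying the DC power-flow equations and the policy constraints) that agree on all observable quantities yet differ in $\Delta_v$. I would also need to dispose of the degenerate zero-injection anchors, where $\Delta_u=0$ conveys no information about $c_I$, in order to confirm that the factor $p_u'/p_u$ used in the constructive step is always well defined.
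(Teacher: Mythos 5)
Correct, and it matches the intended argument: the paper itself states this lemma without proof (importing it from \cite{yudi20SmartGridComm}), and your constructive case analysis --- a same-type anchor $u$ with $p_u\neq 0$ yields $\Delta_v=\bigl(1-p_u'/p_u\bigr)p_v$, an opposite-type anchor with $\Delta_u\neq 0$ certifies that $v$'s type was not the reduced one so $\Delta_v=0$, and anchors in $N(v;\bar H)$ are guaranteed co-islanded with $v$ because $F\subseteq E_H$ leaves all links in $E\setminus E_H$ intact --- is exactly the recovery procedure underlying the cited result. Two of your caveats are also well placed but optional: the lemma as stated only claims the positive (recoverability) direction, so the tightness construction you sketch is not required, and the degenerate same-type anchor with $p_u=0$ (where $\Delta_u=0$ is uninformative and $p_u'/p_u$ is undefined) is a corner-case imprecision in the statement inherited from \cite{yudi20SmartGridComm}, not a gap in your argument.
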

Define $U_B$ as the set of nodes such that $\forall u\in U_B$, $\Delta_u$ can be recovered through Lemma~\ref{lem:recover delta}.

Our key observation is that for any hyper-node $U$, $\tilde{D}_{U,l}$ for any $l\in E_U$ can be computed with the knowledge of $\bftheta'$, and $f_{U,g}$ can be upper-bounded by
\begin{align}
\hat{f}_{U,g} := \sum_{u\in U \cap U_B} f_{u,g} + \sum_{u\in U\setminus U_B} |p_u|,
\end{align}
where $f_{u,g}$ is defined in \eqref{eq:def_fug} for $U=\{u\}$. Since $f_{u,g}$ is known for nodes in $U_B$ and $p_u$ (power injection at $u$ before attack) is also known, $\hat{f}_{U,g}$ is computable. We now show how to use this information to verify the estimated link states based on Lemma~\ref{lem:ground_alter_gale} and  Theorems~\ref{lem:no_Miss_hyper_node}--\ref{lem:no_fa_hyper_direc}.

\subsection{Verification without Knowledge of Ground Truth}

We first tackle the links whose states can be verified without any knowledge of the ground truth {link states}.

\subsubsection{Verifiable Conditions}\label{subsubsec:Verifiable Condition, no ground truth}

The basic idea is to rule out the other possibility by constructing \emph{counterexamples} to the theorems in Section~\ref{subsec:Existing Recovery Conditions} {if the estimated link state is incorrect.}

\begin{figure}[tb]
\vspace{-.5em}
\centering
\includegraphics[width=.6\linewidth]{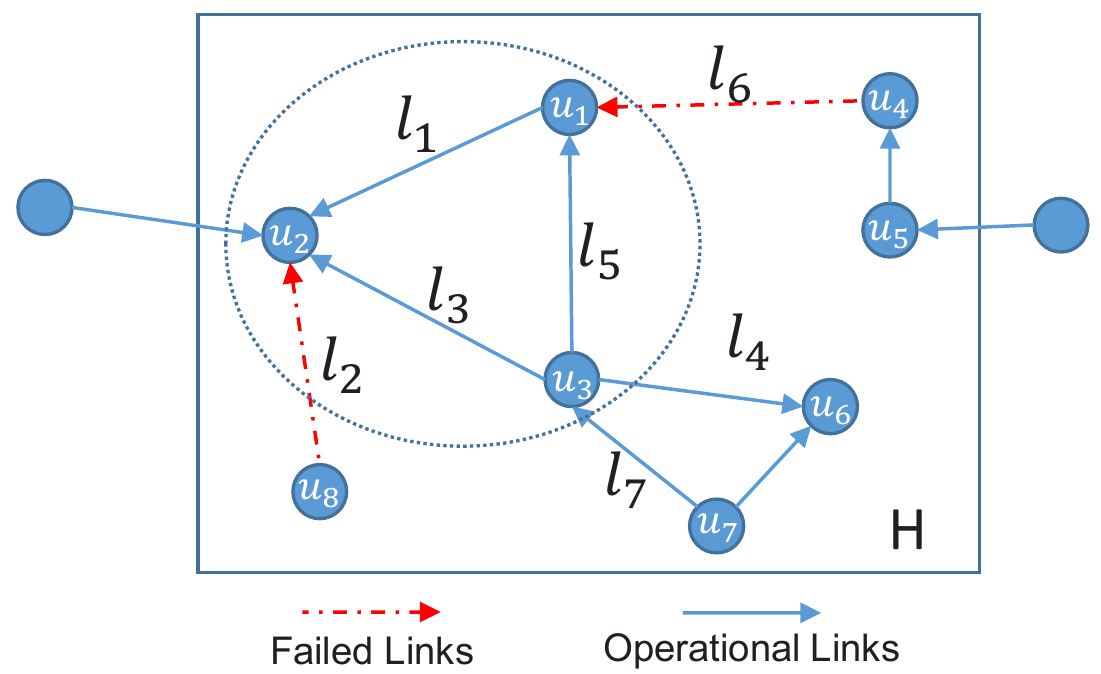}
\vspace{-1em}
\caption{An example of hyper-node (arrow denotes the direction of a power flow over an operational link or a hypothetical power flow over a failed link). } \label{fig:eg_hyper_node}
\vspace{-1em}
\end{figure}

\emph{Links in $1$-edge cuts:}
If link $e = (u_1,u_2)$ forms a \emph{cut} of $H$, i.e., $(V_H, E_H\setminus \{e\})$ contains more connected components than $H$, then by breadth-first search (BFS) starting from $u_1$ and $u_2$ respectively without traversing $e$, we can construct two hyper-nodes $U_1$ and $U_2$ such that $E_{U_1} = E_{U_2} = \{e\}$ and thus $S_{U_1}=S_{U_1}=\emptyset$. {For example, in Fig.~\ref{fig:eg_hyper_node}, link $e:=l_6$ is a 1-edge cut, and thus $U_1:=\{u_4,u_5\}$ and $U_2:=V_H\setminus U_1$ satisfy this condition. }
Then the following verifiable conditions are directly implied by Theorems~\ref{lem:no_Miss_hyper_node}--\ref{lem:no_fa_hyper_direc}:

\begin{corollary}\label{coro:verifiable condition of 1-edge cut}
If $e\in \hat{F}$ and $\min\{ \hat{f}_{U_1,g}, \hat{f}_{U_2,g} \} - \eta|\tilde{D}_{U_1,e}| <0$, then we can verify $e\in F$.
If $e\in E_H\setminus \hat{F}$ and $\min\{ \hat{f}_{U_1,g}, \hat{f}_{U_2,g} \} + (\eta-1)|\tilde{D}_{U_1,e}| <0$, then we can verify $e\in E_H\setminus F$.
\end{corollary}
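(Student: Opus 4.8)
The plan is to prove each of the two implications by contradiction, turning the recovery guarantees of Theorem~\ref{lem:no_Miss_hyper_node} and Theorem~\ref{lem:no_fa_hyper_direc} into verification tests (the ``construct counterexamples'' idea). The construction already supplies two hyper-nodes $U_1,U_2$ with $E_{U_1}=E_{U_2}=\{e\}$ and $S_{U_1}=S_{U_2}=\emptyset$; moreover, since $u_1\in U_1$, $u_2\in U_2$ are the only endpoints of $e$ inside the respective sets, $\tilde D_{U_1,e}=\tilde D_{u_1,e}=-\tilde D_{u_2,e}=-\tilde D_{U_2,e}$, so $|\tilde D_{U_1,e}|=|\tilde D_{U_2,e}|$ is the absolute hypothetical flow on $e$, which is nonzero because $\theta'_{u_1}\neq\theta'_{u_2}$ by the standing assumption that every link in $E_H$ carries nonzero hypothetical flow. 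I would also invoke the key observation recorded just above the corollary, namely $f_{U,g}\le\hat f_{U,g}$, so that the unknown quantity $f_{U_i,g}$ in condition~3 of each theorem can be replaced by its observable upper bound.

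For the first implication, suppose $e\in\hat F$ but, for contradiction, $e\in E_H\setminus F$. I would apply Theorem~\ref{lem:no_fa_hyper_direc} with whichever of $U_1,U_2$ attains the minimum, say $U_1$. Since $e$ is the only link of $E_{U_1}$ and it is operational, $E_{U_1}\setminus F=\{e\}$ and $E_{U_1}\cap F=\emptyset$; hence condition~1 collapses to $\tilde D_{U_1,e}^2>0$ (true by the nonzero-flow remark) and condition~2 is vacuous. For condition~3, the hypothesis $\min\{\hat f_{U_1,g},\hat f_{U_2,g}\}-\eta|\tilde D_{U_1,e}|<0$ together with $f_{U_1,g}\le\hat f_{U_1,g}$ yields $f_{U_1,g}-\eta|\tilde D_{U_1,e}|<0$. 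Theorem~\ref{lem:no_fa_hyper_direc} then forces $e\notin\hat F$, contradicting $e\in\hat F$; therefore $e\in F$.

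The second implication is symmetric. Suppose $e\in E_H\setminus\hat F$ but, for contradiction, $e\in F$. Taking $U_1$ to attain the minimum, we now have $E_{U_1}\cap F=\{e\}$, so condition~1 of Theorem~\ref{lem:no_Miss_hyper_node} again reduces to $\tilde D_{U_1,e}^2>0$, condition~2 ($S_{U_1}=\emptyset$) holds by construction, and $f_{U_1,g}\le\hat f_{U_1,g}$ turns the hypothesis $\min\{\hat f_{U_1,g},\hat f_{U_2,g}\}+(\eta-1)|\tilde D_{U_1,e}|<0$ into condition~3, $f_{U_1,g}+(\eta-1)|\tilde D_{U_1,e}|<0$. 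Hence $e\in\hat F$, contradicting $e\notin\hat F$, and so $e\in E_H\setminus F$.

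The step deserving most care is not the logic above---given the two theorems it is a direct corollary---but the legitimacy of substituting the computable $\hat f_{U_i,g}$ for the unknown $f_{U_i,g}$ while preserving the conservative direction in condition~3. This rests on the observation recorded before the statement: each nodal term $g_{D,u},g_{D,-u}$ lies in $[0,|p_u|]$ by the validity constraints \eqref{eq:const_valid_start}--\eqref{eq:const_valid_load}, and for $u\in U_B$ the recoverability of $\Delta_u$ makes the nodal contribution $f_{u,g}$ evaluable, so that $\hat f_{U,g}\ge f_{U,g}$ holds and is computable from observable data alone. The remaining work is pure bookkeeping: verifying that, because the cut contains the single link $e$, conditions~1--2 of each theorem degenerate to $\tilde D_{U_i,e}^2>0$ and the already-established $S_{U_i}=\emptyset$, and that taking the minimum over $\{U_1,U_2\}$ exactly realizes the existential ``there exists a hyper-node'' in the two theorems, so that one satisfied side of the cut suffices.
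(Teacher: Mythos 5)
Your proof is correct and takes essentially the same route as the paper's: a proof by contradiction that invokes Theorem~\ref{lem:no_fa_hyper_direc} (resp.\ Theorem~\ref{lem:no_Miss_hyper_node}) on the cut-induced hyper-nodes, with the observable bound $\hat f_{U_i,g}\ge f_{U_i,g}$ standing in for the unknown $f_{U_i,g}$. You merely spell out details the paper leaves implicit (the degeneration of conditions~1--2, $|\tilde D_{U_1,e}|=|\tilde D_{U_2,e}|\neq 0$, and that $\hat f_{U_i,g}$ is evaluated under the contradiction hypothesis, which is the paper's closing remark), so no gap remains.
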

\begin{proof}
If $e\in \hat{F}$ and $\min\{ \hat{f}_{U_1,g}, \hat{f}_{U_2,g} \} - \eta|\tilde{D}_{U_1,e}| <0$, then $e$ must have failed, since otherwise $e$ would have been estimated as operational according to Theorem~\ref{lem:no_fa_hyper_direc}. Similarly, if $e\in E_H\setminus \hat{F}$ and $\min\{ \hat{f}_{U_1,g}, \hat{f}_{U_2,g} \} + (\eta-1)|\tilde{D}_{U_1,e}| <0$, then $e$ must be operational, since otherwise $e$ would have been estimated as failed according to Theorem~\ref{lem:no_Miss_hyper_node}.
Note that 
as our verification is based on contradiction, $\hat{f}_{U_i,g}$ should be computed as
if $e\in E_H\setminus F$ to verify $e\in \hat{F}$ and vice-versa.
\end{proof}

\emph{Links in $2$-edge cuts:}
If links $e_1, e_2\in E_H$ together form a cut of $H$ but {each individual link does} not, 
then by BFS starting from the endpoints of $e_1$ (or $e_2$) without traversing $e_1$ or $e_2$, we can construct two hyper-nodes $U_1, U_2$ such that $E_{U_1} = E_{U_2} = \{e_1, e_2\}$. {For example, as $e_1:=l_4$ and $e_2:=l_7$ form a $2$-edge cut of $H$ in Fig.~\ref{fig:eg_hyper_node},  $U_1:=\{u_6,u_7\}$ and $U_2:=V_H\setminus U_1$ satisfy this condition.  Moreover, any pair of links in a cycle $C$ form a 2-edge cut if they are not in any other cycle in $H$, e.g., any pair of links in the cycle $\{l_1, l_3, l_5\}$ satisfy this condition.} Based on this observation, we provide the following conditions for verifying the states of such links. 

\begin{theorem}\label{lem:certify_Ef_0}
Consider a hyper-node $U$ with $E_U = \{e_1,e_2\}$ and $e_1, e_2 \in  E_H\setminus \hat{F}$. If $\tilde{D}_{U,e_1}\tilde{D}_{U,e_2} < 0$, then $e_1, e_2$ are guaranteed to both belong to $E_H\setminus F$ if
\begin{enumerate}
    \item $\hat{f}_{U,g} + (\eta-1) \min\{|\tilde{D}_{U,e_1}|, |\tilde{D}_{U,e_2}|\}  < 0$, and
    \item $\eta < 1- \min\{ \frac{\hat{f}_{U,g} + |\tilde{D}_{U,e_1}|}{|\tilde{D}_{U,e_2}|}, \frac{\hat{f}_{U,g} + |\tilde{D}_{U,e_2}|}{|\tilde{D}_{U,e_1}|} \} $.
\end{enumerate}
If $\tilde{D}_{U,e_1}\tilde{D}_{U,e_2} > 0$, then we can verify: 
\begin{enumerate}
    \item $e_1\in E_H\setminus F$ if $(1-\eta)|\tilde{D}_{U,e_1}| > \hat{f}_{U,g} + |\tilde{D}_{U,e_2}|$,
    \item $e_2 \in E_H\setminus F$ if $(1-\eta)|\tilde{D}_{U,e_2}| > \hat{f}_{U,g} + |\tilde{D}_{U,e_1}|$.
\end{enumerate}
\end{theorem}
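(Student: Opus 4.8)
The plan is to prove each claim by \emph{contradiction}, turning the recovery guarantees of Section~\ref{subsec:Existing Recovery Conditions} into verification rules exactly as in Corollary~\ref{coro:verifiable condition of 1-edge cut}: assuming FLD mislabelled one of $e_1,e_2$, I would invoke Theorem~\ref{lem:no_Miss_hyper_node} (or its primitive form, Lemma~\ref{lem:ground_alter_gale}) to force the opposite label, contradicting the observed $\hat{F}$. Two features make the conditions testable. First, the unknown $f_{U,g}$ is everywhere replaced by its computable upper bound $\hat{f}_{U,g}\ge f_{U,g}$; since $\eta-1<0$ and the relevant inequalities have the form $f_{U,g}+(\eta-1)|\tilde{D}_{U,\cdot}|<0$, monotonicity lets me substitute $\hat{f}_{U,g}$ safely. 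Second, every $\tilde{D}_{U,e_i}$ is observable from $\bftheta'$ and is nonzero by the assumption $\theta_s'\ne\theta_t'$. The argument then splits by the sign of $\tilde{D}_{U,e_1}\tilde{D}_{U,e_2}$ and, within each regime, by the \emph{true} set $F\cap E_U\in\{\{e_1\},\{e_2\},\{e_1,e_2\}\}$ (the empty case being the desired conclusion).

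Next I would dispose of the sub-cases that fall directly under Theorem~\ref{lem:no_Miss_hyper_node}. In the regime $\tilde{D}_{U,e_1}\tilde{D}_{U,e_2}<0$, a single failure $F\cap E_U=\{e_1\}$ satisfies $S_U=\emptyset$ (the lone operational cut link $e_2$ has the opposite sign, so $e_2\notin S_U$) and condition~1 of Theorem~\ref{lem:no_Miss_hyper_node} (a single failed cut link with $\tilde{D}_{U,e_1}\ne0$); then $f_{U,g}+(\eta-1)|\tilde{D}_{U,e_1}|\le\hat{f}_{U,g}+(\eta-1)\min\{|\tilde{D}_{U,e_1}|,|\tilde{D}_{U,e_2}|\}<0$ by our condition~1, forcing $e_1\in\hat{F}$; the case $F\cap E_U=\{e_2\}$ is symmetric. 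In the regime $\tilde{D}_{U,e_1}\tilde{D}_{U,e_2}>0$, the \emph{double} failure $F\cap E_U=\{e_1,e_2\}$ is the easy one: $S_U=\emptyset$, the two failed cut links share a sign (condition~1 holds), and $(1-\eta)|\tilde{D}_{U,e_1}|>\hat{f}_{U,g}+|\tilde{D}_{U,e_2}|>\hat{f}_{U,g}\ge f_{U,g}$ delivers condition~3, again forcing $e_1\in\hat{F}$.

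The remaining sub-cases are the crux, because the \emph{second} cut link breaks a hypothesis of Theorem~\ref{lem:no_Miss_hyper_node}: in the $<0$ regime the double failure $F\cap E_U=\{e_1,e_2\}$ violates the sign condition~1, while in the $>0$ regime the single failure $F\cap E_U=\{e_1\}$ puts $e_2\in S_U$, so $S_U\ne\emptyset$. Here I would return to Lemma~\ref{lem:ground_alter_gale} and construct, for every admissible $Q_m$ containing the link to be detected, an explicit $\bm{z}\ge\bm{0}$ solving \eqref{eq:alter_gale}. The construction extends the certificate behind Theorem~\ref{lem:no_Miss_hyper_node}: load the columns of $\bm{A}_D^T$ indexed by the nodes of $U$ with sign aligned to $\tilde{D}_{U,e_1}$, use the $\bm{A}_x^T$ and $\bm{W}^T$ blocks to zero out \eqref{eq:alter_gale_eq} on every coordinate outside $\{e_1,e_2\}$, and add one component that absorbs the flow of the interfering link $e_2$ (which can no longer be cancelled ``for free'' because in the $<0$ regime $e_2$ has the wrong sign, and in the $>0$ regime $e_2$ is operational and so contributes no column to $\bm{W}^T$). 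That extra component is what inflates the clean bound ``$f_{U,g}+(\eta-1)|\tilde{D}_{U,e_1}|<0$'' of Theorem~\ref{lem:no_Miss_hyper_node} into the penalized $f_{U,g}+|\tilde{D}_{U,e_2}|+(\eta-1)|\tilde{D}_{U,e_1}|<0$, i.e.\ $(1-\eta)|\tilde{D}_{U,e_1}|>\hat{f}_{U,g}+|\tilde{D}_{U,e_2}|$. In the $>0$ regime this is precisely the stated inequality; in the $<0$ regime, condition~2 is equivalent to the disjunction $(1-\eta)|\tilde{D}_{U,e_1}|>\hat{f}_{U,g}+|\tilde{D}_{U,e_2}|$ \emph{or} $(1-\eta)|\tilde{D}_{U,e_2}|>\hat{f}_{U,g}+|\tilde{D}_{U,e_1}|$, and whichever disjunct holds makes \eqref{eq:alter_gale_ineq} strictly negative for $e_1$ or $e_2$, so at least one of them lands in $\hat{F}$, contradicting that both were labelled operational.

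The main obstacle is this certificate construction. I must show the $|\tilde{D}_{U,e_2}|$ budget is simultaneously (a) enough to keep \eqref{eq:alter_gale_eq} exactly satisfied once $e_2$'s column cannot be offset through $\bm{W}^T$ or through sign alignment, and (b) not wasteful, so that the strict inequality \eqref{eq:alter_gale_ineq} still holds after bounding the data by $\hat{f}_{U,g}$. The delicate bookkeeping is checking that this works for \emph{every} $Q_m$ containing the target link (not just the minimal one), and that replacing $f_{U,g}$ by $\hat{f}_{U,g}$ preserves strictness uniformly across node types (load vs.\ generator, recoverable vs.\ non-recoverable); everything else reduces to the cut-and-sign combinatorics already exploited for the $1$-edge-cut case.
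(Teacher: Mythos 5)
Your proposal is correct and follows essentially the same route as the paper's proof: rule out each hypothesized mistake pattern by contradiction, invoking Theorem~\ref{lem:no_Miss_hyper_node} for the sub-cases it covers and constructing explicit certificates $\bm{z}\ge\bm{0}$ for \eqref{eq:alter_gale} (sign-aligned $z_D$ entries over $U$, a $z_{w}$ entry for the target link, and a $z_{x\pm}$ budget of $|\tilde{D}_{U,e_2}|$ absorbing the interfering link) in the crux sub-cases, with $f_{U,g}$ replaced by the computable bound $\hat{f}_{U,g}$ exactly as the paper does. The only difference is organizational: in the $\tilde{D}_{U,e_1}\tilde{D}_{U,e_2}>0$ regime the paper uses a single construction valid ``regardless of the status of $e_2$'' (since the $z_{x+,e_2}$ cost $(1-x^*_{e_2})|\tilde{D}_{U,e_2}|$ is bounded by $|\tilde{D}_{U,e_2}|$ either way), whereas you split by the true state of $e_2$ and dispatch the double-failure sub-case through Theorem~\ref{lem:no_Miss_hyper_node}—both are sound, and the certificates you outline yield precisely the paper's expanded inequalities.
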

\begin{proof}
We first prove the case that $\tilde{D}_{U,e_1}\tilde{D}_{U,e_2} < 0$. Given $e_1, e_2 \in  E_H\setminus \hat{F}$ where $\hat{F}$ is returned by FLD, there are 3 possible forms of mistakes when the ground truth {failed link set $F$} is unknown, and we will prove the impossibility for each of them. If $e_1\in F, e_2\in E_H\setminus F$, Theorem~\ref{lem:no_Miss_hyper_node} guarantees that $e_1\notin Q_m$ due to condition~1), which {introduces} contradiction. Similarly, $e_2\in F, e_1\in E_H\setminus F$ is also impossible. If $e_1, e_2 \in Q_m$, assume without loss of generality that $\eta <1- \frac{\hat{f}_{U,g} + |\tilde{D}_{U,e_1}|}{|\tilde{D}_{U,e_2}|}$. Then, we construct the following $\bm{z}$: $\forall u\in U$, $z_{D,u} = 1$ if $\tilde{D}_{U,e_2} < 0$ or $z_{D,-u} = 1$ if $\tilde{D}_{U,e_2} > 0$, $z_{w,m,e_2} = |\tilde{D}_{U,e_2}|$, $z_{x-,e_1} = |\tilde{D}_{U,e_1}|$, and other entries of $\bm{z}$ as 0. Then, \eqref{eq:alter_gale_eq} holds for sure and \eqref{eq:alter_gale_ineq} holds since it can be expanded as $\hat{f}_{U,g} + (\eta-1)|\tilde{D}_{U,e_2}| + |\tilde{D}_{U,e_1}| < 0$ due to condition~2). According to Lemma~\ref{lem:ground_alter_gale}, {it is impossible to have} $e_1, e_2 \in Q_m$, which verifies {that} $e_1, e_2 \in  E_H\setminus F$.

Next, with $\tilde{D}_{U,e_1}\tilde{D}_{U,e_2} > 0$, we show how to verify $e_1$. If $e_1\in Q_m$, regardless of the true {state} of $e_2$, we construct the following $\bm{z}$ for Lemma~\ref{lem:ground_alter_gale}: $\forall u\in U$, $z_{D,u} = 1$ if $\tilde{D}_{U,e_1} < 0$ or $z_{D,-u} = 1$ if $\tilde{D}_{U,e_1} > 0$, $z_{w,m,e_1} = |\tilde{D}_{U,e_1}|$, $z_{x+,e_2} = |\tilde{D}_{U,e_2}|$, and other entries of $\bm{z}$ as 0. Then \eqref{eq:alter_gale} holds due to condition~1), which contradicts {the} assumption that $e_1\in Q_m$. The verification condition for $e_2$ can be derived similarly.
\end{proof}

\begin{theorem}\label{lem:certify_Ef_1}
Consider a hyper-node $U$ with $E_U = \{e_1,e_2\}$ and $e_1 \in \hat{F},e_2\in E_H\setminus \hat{F}$. If $\tilde{D}_{U,e_1}\tilde{D}_{U,e_2} > 0$, then the states of $e_1, e_2$ are guaranteed to be correctly identified if
\begin{enumerate}
    \item $\hat{f}_{U,g} - \eta|\tilde{D}_{U,e_1}| < 0$, $\hat{f}_{U,g} + (\eta-1) |\tilde{D}_{U,e_2}| < 0$, and
    \item either $\eta > \frac{\hat{f}_{U,g} + |\tilde{D}_{U,e_2}|}{|\tilde{D}_{U,e_1}|}$ or $\eta < 1-\frac{\hat{f}_{U,g} + |\tilde{D}_{U,e_1}|}{|\tilde{D}_{U,e_2}|}$.
\end{enumerate}
If $\tilde{D}_{U,e_1}\tilde{D}_{U,e_2} < 0$, then we can verify: 
\begin{enumerate}
    \item $e_1\in F$ if $\eta|\tilde{D}_{U,e_1}| > \hat{f}_{U,g} + |\tilde{D}_{U,e_2}|$,
    \item $e_2 \in E_H\setminus F$ if $(1-\eta)|\tilde{D}_{U,e_2}| >\hat{f}_{U,g} + |\tilde{D}_{U,e_1}|$.
\end{enumerate}
\end{theorem}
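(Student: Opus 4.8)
The plan is to reuse the contradiction scheme of Theorem~\ref{lem:certify_Ef_0}. Since FLD returns $e_1\in\hat F$ and $e_2\in E_H\setminus\hat F$, the true states on $\{e_1,e_2\}$ can disagree with this output in exactly three ways: (a) $e_1,e_2\in F$, so $e_2\in Q_m$; (b) $e_1,e_2\in E_H\setminus F$, so $e_1\in Q_f$; and (c) $e_1\in E_H\setminus F$, $e_2\in F$, so $e_1\in Q_f$ and $e_2\in Q_m$. First I would show, for $\tilde D_{U,e_1}\tilde D_{U,e_2}>0$, that hypotheses~1)--2) make all three impossible, leaving only the correct pattern $e_1\in F,\,e_2\in E_H\setminus F$; for $\tilde D_{U,e_1}\tilde D_{U,e_2}<0$ I would instead rule out $\{$(b),(c)$\}$ to certify $e_1\in F$ and $\{$(a),(c)$\}$ to certify $e_2\in E_H\setminus F$, under the respective one-sided inequalities. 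Throughout I use the already-established bound $f_{U,g}\le\hat f_{U,g}$, noting that because $\hat f_{U,g}$ is computed without reference to $F$ it dominates $f_{U,g}$ under either orientation of the hyper-node.

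For $\tilde D_{U,e_1}\tilde D_{U,e_2}>0$, patterns (a) and (b) fall to the explicit theorems. In (a), $E_U\cap F=\{e_1,e_2\}$ have a common sign, so condition~1) of Theorem~\ref{lem:no_Miss_hyper_node} holds and $S_U=\emptyset$ (no operational boundary link remains); with the second inequality of hypothesis~1) and $f_{U,g}\le\hat f_{U,g}$, its condition~3) holds, forcing $e_2\in\hat F$, a contradiction. In (b), $E_U\cap F=\emptyset$ and both links are operational with the same sign, so Theorem~\ref{lem:no_fa_hyper_direc} applies and the first inequality of hypothesis~1) yields its condition~3), forcing $e_1\notin\hat F$, again a contradiction. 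Pattern (c) is the crux: here $E_U\cap F=\{e_2\}$ while $E_U\setminus F=\{e_1\}$ and $\tilde D_{U,e_1}\tilde D_{U,e_2}>0$, so $S_U=\{e_1\}\ne\emptyset$ and neither explicit theorem is available. I therefore invoke Lemma~\ref{lem:ground_alter_gale} directly, building a feasible $\bm z\ge\bm0$ for \eqref{eq:alter_gale} with $Q_m=\{e_2\},Q_f=\{e_1\}$, exactly as in the proof of Theorem~\ref{lem:certify_Ef_0}: I put $z_{D,\pm u}=1$ on $U$ (sign chosen to cancel one boundary row), and turn on either $z_{w,f,e_1}=|\tilde D_{U,e_1}|$ with $z_{x-,e_2}=|\tilde D_{U,e_2}|$, or $z_{w,m,e_2}=|\tilde D_{U,e_2}|$ with $z_{x+,e_1}=|\tilde D_{U,e_1}|$. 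Either choice makes \eqref{eq:alter_gale_eq} vanish row by row, and \eqref{eq:alter_gale_ineq} collapses to $\hat f_{U,g}+|\tilde D_{U,e_2}|-\eta|\tilde D_{U,e_1}|<0$ or $\hat f_{U,g}+|\tilde D_{U,e_1}|+(\eta-1)|\tilde D_{U,e_2}|<0$, i.e., the two alternatives of hypothesis~2); either one alone certifies that (c) cannot occur.

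For $\tilde D_{U,e_1}\tilde D_{U,e_2}<0$ the two statements are handled separately, and the opposite signs now make the mixed pattern (c) tractable by the explicit theorems while the same-type patterns require the alternative. To certify $e_1\in F$ I rule out (b) and (c). In (c), $E_U\setminus F=\{e_1\}$ with $\tilde D_{U,e_1}\tilde D_{U,e_2}<0$ gives $S_U=\emptyset$, so Theorem~\ref{lem:no_fa_hyper_direc} applies; since $\eta|\tilde D_{U,e_1}|>\hat f_{U,g}+|\tilde D_{U,e_2}|\ge\hat f_{U,g}\ge f_{U,g}$, its condition~3) holds and $e_1\notin\hat F$, a contradiction. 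In (b) the two operational boundary links have opposite signs, violating condition~1) of Theorem~\ref{lem:no_fa_hyper_direc}, so I again construct $\bm z$ for $Q_f=\{e_1\},Q_m=\emptyset$ whose \eqref{eq:alter_gale_ineq} reduces precisely to $\eta|\tilde D_{U,e_1}|>\hat f_{U,g}+|\tilde D_{U,e_2}|$. Certifying $e_2\in E_H\setminus F$ is the mirror image: (c) is killed by Theorem~\ref{lem:no_Miss_hyper_node} (again $S_U=\emptyset$, with condition~3) following from $(1-\eta)|\tilde D_{U,e_2}|>\hat f_{U,g}+|\tilde D_{U,e_1}|\ge f_{U,g}$), whereas (a), in which $e_1,e_2\in F$ have opposite signs and hence violate condition~1) of that theorem, is killed by a $\bm z$ for $Q_m=\{e_2\},Q_f=\emptyset$ whose inequality reduces to $(1-\eta)|\tilde D_{U,e_2}|>\hat f_{U,g}+|\tilde D_{U,e_1}|$.

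The main obstacle is the explicit $\bm z$-construction in the mixed miss/false-alarm pattern, where the hyper-node theorems do not apply. Two points need care. First, the signs of the $z_{D,\pm u}$ must be chosen so that \emph{both} boundary rows of \eqref{eq:alter_gale_eq} cancel while every coordinate of $\bm z$ stays nonnegative; the fact that $z_{x+,e}$ and $z_{x-,e}$ carry the coefficients $1-x_e^*$ and $x_e^*$ in \eqref{eq:alter_gale_ineq} is what lets me absorb the leftover equality slack at zero cost for the link that is hypothesized failed. Second, I must bound $\bm g_D^{T}\bm z_D$ by $\hat f_{U,g}$ even though the verifier does not know the true orientation of $f_{U,g}$; this is exactly where the orientation-independence of the observable bound $f_{U,g}\le\hat f_{U,g}$ (resting on $g_{D,u},g_{D,-u}\ge0$ behind \eqref{eq:def_fug}) is indispensable. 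Once these are in place, checking that each \eqref{eq:alter_gale_ineq} expands to the stated scalar inequality is routine.
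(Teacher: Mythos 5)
Your proof is correct and takes essentially the same route as the paper's: rule out the incorrect state patterns via Theorems~\ref{lem:no_Miss_hyper_node}--\ref{lem:no_fa_hyper_direc} where their sign and $S_U$ conditions hold, and otherwise build explicit nonnegative solutions to \eqref{eq:alter_gale} for Lemma~\ref{lem:ground_alter_gale} with $f_{U,g}$ replaced by the observable bound $\hat{f}_{U,g}$ --- your two constructions in the same-sign mixed pattern ($z_{w,f,e_1}$ with $z_{x-,e_2}$, or $z_{w,m,e_2}$ with $z_{x+,e_1}$) match the paper's exactly, keyed to the two alternatives of condition~2). The only cosmetic difference is in the opposite-sign case, where the paper uses a single status-agnostic $\bm{z}$ (e.g., $z_{x+,e_2}$ whose cost $(1-x_{e_2}^*)|\tilde{D}_{U,e_2}|\le|\tilde{D}_{U,e_2}|$ covers both remaining patterns at once) while you split into patterns and dispatch the mixed one through the explicit theorems (valid, since opposite signs force $S_U=\emptyset$).
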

\begin{proof}
We first prove the impossibility of each possible mistake if $\tilde{D}_{U,e_1}\tilde{D}_{U,e_2} > 0$. First, we rule out the possibility that $e_1 \in Q_f$, $e_2 \in E_H\setminus F$ according to Theorem~\ref{lem:no_fa_hyper_direc} and condition~1). Similarly, according to Theorem~\ref{lem:no_Miss_hyper_node} and condition~1), $e_1\in F$ while $e_2\in Q_m$ is also impossible. Next, we prove the impossibility of $e_1\in Q_f, e_2\in Q_m$ by constructing a solution $\bm{z}$ to \eqref{eq:alter_gale}. Specifically, if $\eta > \frac{\hat{f}_{U,g} + |\tilde{D}_{U,e_2}|}{|\tilde{D}_{U,e_1}|}$, then $\forall u\in U$, we set $z_{D,u} = 1$ if $\tilde{D}_{U,e_1} > 0$ or $z_{D,-u} = 1$ if $\tilde{D}_{U,e_1} < 0$, $z_{w,f,e_1} = |\tilde{D}_{U,e_1}|$, $z_{x-,e_2} = |\tilde{D}_{U,e_2}|$, and other entries of $\bm{z}$ as 0. If $\eta < 1-\frac{\hat{f}_{U,g} + |\tilde{D}_{U,e_1}|}{|\tilde{D}_{U,e_2}|}$, then $\forall u\in U$, we set $z_{D,u} = 1$ if $\tilde{D}_{U,e_2} < 0$ or $z_{D,-u} = 1$ if $\tilde{D}_{U,e_2} > 0$, $z_{w,m,e_2} = |\tilde{D}_{U,e_2}|$, $z_{x+,e_1} = |\tilde{D}_{U,e_1}|$, and other entries of $\bm{z}$ as 0. It is easy to check the satisfaction of \eqref{eq:alter_gale} under both constructions above, which {rules} out the possibility of $e_1\in Q_f, e_2\in Q_m$ according to Lemma~\ref{lem:ground_alter_gale} and $e_1 \in F,e_2\in E_H\setminus F$ is thus guaranteed.

Next, we prove the verification condition for $e_1\notin Q_f$ if $\tilde{D}_{U,e_1}\tilde{D}_{U,e_2} < 0$. We prove by constructing a solution $\bm{z}$ as follows regardless of the status of $e_2$: $\forall u\in U$, if $\tilde{D}_{U,e_1} < 0$, we set $z_{D,-u} = 1$; otherwise, we set $z_{D,u} = 1$. Then, we set $z_{w,f,e_1} = |\tilde{D}_{U,e_1}|$, $z_{x+,e_2} = |\tilde{D}_{U,e_2}|$, and other entries of $\bm{z}$ as 0.  Then, \eqref{eq:alter_gale_eq} holds for sure and \eqref{eq:alter_gale_ineq} holds since it can be expanded as $\hat{f}_{U,g} -\eta|\tilde{D}_{U,e_1}| + |\tilde{D}_{U,e_2}| < 0$ due to condition 1), which rules out the possibility of $e_1\in Q_f$ according to Lemma~\ref{lem:ground_alter_gale} and thus verifies that $e_1\in F$. The verification condition for $e_2\notin Q_m$ can be proved similarly.
\end{proof}

\begin{theorem}\label{lem:certify_Ef_2}
Consider a hyper-node $U$ with $E_U = \{e_1,e_2\}$ and $e_1, e_2 \in \hat{F}$. Then, we can verify: 
\begin{enumerate}
    \item $e_1\in F$ if $\eta|\tilde{D}_{U,e_1}| > \hat{f}_{U,g} + |\tilde{D}_{U,e_2}|$,
    \item $e_2 \in F$ if $\eta|\tilde{D}_{U,e_2}| > \hat{f}_{U,g} + |\tilde{D}_{U,e_1}|$.
\end{enumerate}
\end{theorem}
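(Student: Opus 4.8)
The plan is to verify $e_1$ and $e_2$ separately, each by contradiction, following the template of the last parts of Theorems~\ref{lem:certify_Ef_0} and~\ref{lem:certify_Ef_1}. Since both $e_1,e_2\in\hat{F}$, the only way an estimate can be wrong is a \emph{false alarm}, so to verify $e_1\in F$ it suffices to rule out $e_1\in Q_f$. I would assume for contradiction that $e_1\in E_H\setminus F$ (hence $e_1\in Q_f$) and exhibit a nonnegative $\bm{z}$ solving \eqref{eq:alter_gale} for every $Q_f\ni e_1$; by Lemma~\ref{lem:ground_alter_gale} this forbids $e_1$ from being falsely detected, contradicting $e_1\in\hat{F}$, and therefore forces $e_1\in F$. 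The verification of $e_2\in F$ is identical after swapping the roles of $e_1$ and $e_2$.

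The construction I would use is: for all $u\in U$ set $z_{D,u}=1$ if $\tilde{D}_{U,e_1}>0$ and $z_{D,-u}=1$ otherwise; set $z_{w,f,e_1}=|\tilde{D}_{U,e_1}|$; and cancel the $e_2$-row by activating $z_{x-,e_2}=|\tilde{D}_{U,e_2}|$ when $\tilde{D}_{U,e_1}\tilde{D}_{U,e_2}>0$, or $z_{x+,e_2}=|\tilde{D}_{U,e_2}|$ when $\tilde{D}_{U,e_1}\tilde{D}_{U,e_2}<0$; all remaining entries are $0$. Because $E_U=\{e_1,e_2\}$ forces $\tilde{D}_{U,e}=0$ for every $e\notin\{e_1,e_2\}$, the only nontrivial rows of \eqref{eq:alter_gale_eq} are those of $e_1$ and $e_2$. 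The $z_D$ block contributes $|\tilde{D}_{U,e_1}|$ to row $e_1$, which is cancelled by $z_{w,f,e_1}$, and contributes $\mathrm{sign}(\tilde{D}_{U,e_1})\,\tilde{D}_{U,e_2}$ to row $e_2$, which is cancelled by the chosen $z_x$ entry; the sign split of $\tilde{D}_{U,e_1}\tilde{D}_{U,e_2}$ is precisely what selects $z_{x-,e_2}$ versus $z_{x+,e_2}$ so that this cancellation holds. Hence \eqref{eq:alter_gale_eq} is satisfied and $\bm{z}\ge\bm{0}$.

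It then remains to check \eqref{eq:alter_gale_ineq}. The $z_D$ block contributes $\sum_{u\in U}g_{D,u}$ or $\sum_{u\in U}g_{D,-u}$, which by the key observation is at most $\hat{f}_{U,g}$; the entry $z_{w,f,e_1}$ contributes $-\eta|\tilde{D}_{U,e_1}|$; and the $e_2$-entry contributes either $(1-x^*_{e_2})|\tilde{D}_{U,e_2}|$ or $x^*_{e_2}|\tilde{D}_{U,e_2}|$, each bounded by $|\tilde{D}_{U,e_2}|$ irrespective of the unknown true state of $e_2$. Thus the left-hand side is at most $\hat{f}_{U,g}-\eta|\tilde{D}_{U,e_1}|+|\tilde{D}_{U,e_2}|$, which is strictly negative exactly under the hypothesis $\eta|\tilde{D}_{U,e_1}|>\hat{f}_{U,g}+|\tilde{D}_{U,e_2}|$. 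This establishes \eqref{eq:alter_gale} and completes the contradiction.

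The one genuinely delicate point is making a \emph{single} construction work uniformly over both signs of $\tilde{D}_{U,e_1}\tilde{D}_{U,e_2}$ and over the unknown true state of $e_2$. The sign of $\tilde{D}_{U,e_2}$ relative to $\tilde{D}_{U,e_1}$ dictates whether the $e_2$-row is cancelled by $z_{x-,e_2}$ (weighted by $x^*_{e_2}$ in the objective) or by $z_{x+,e_2}$ (weighted by $1-x^*_{e_2}$); the crucial observation is that either coefficient is at most $1$, so the contribution to \eqref{eq:alter_gale_ineq} is at most $|\tilde{D}_{U,e_2}|$ whether or not $e_2$ truly failed, which is why the verification condition needs no assumption on $\mathrm{sign}(\tilde{D}_{U,e_1}\tilde{D}_{U,e_2})$. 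The remaining ingredient is the previously established bound $\hat{f}_{U,g}\ge\sum_{u\in U}g_{D,\pm u}$, which is what lets me replace the ground-truth-dependent $\bm{g}_D$ term by the computable quantity $\hat{f}_{U,g}$.
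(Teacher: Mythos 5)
Your proof is correct and takes essentially the same route as the paper's: the identical construction of $\bm{z}$ (with $z_{D,u}$ or $z_{D,-u}$ chosen by the sign of $\tilde{D}_{U,e_1}$, $z_{w,f,e_1}=|\tilde{D}_{U,e_1}|$, and $z_{x-,e_2}$ versus $z_{x+,e_2}$ selected by the sign of $\tilde{D}_{U,e_1}\tilde{D}_{U,e_2}$), the same bound $[\bm{g}_D^T,\bm{g}_x^T,\bm{g}_w^T,\bm{0}]\bm{z}\le \hat{f}_{U,g}+|\tilde{D}_{U,e_2}|-\eta|\tilde{D}_{U,e_1}|<0$, and the same contradiction via Lemma~\ref{lem:ground_alter_gale}. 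Your explicit verification that the construction works uniformly over every $Q_f\ni e_1$ and over the unknown true state of $e_2$ merely spells out what the paper compresses into ``regardless of the status of $e_2$.''
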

\begin{proof}
We only prove the verification condition for $e_1\in F$ since the condition for $e_2$ can be proved similarly. We prove by contradiction that constructs a solution to \eqref{eq:alter_gale} if $e_1\in Q_f$. Specifically, with condition~1), we can always construct {a} $\bm{z}$ for \eqref{eq:alter_gale} as follows regardless of the status of $e_2$: $\forall u\in U$, $z_{D,u} = 1$ if $\tilde{D}_{U,e_1} > 0$ or $z_{D,-u} = 1$ if $\tilde{D}_{U,e_1} < 0$ and $z_{w,f,e_1} = |\tilde{D}_{U,e_1}|$. In addition, if $\tilde{D}_{U,e_1}\tilde{D}_{U,e_2}>0$, we set $z_{x-,e_2} = |\tilde{D}_{U,e_2}|$; otherwise, we set $z_{x+,e_2} = |\tilde{D}_{U,e_2}|$. Finally, other entries of $\bm{z}$ are set as 0. It is easy to check the satisfaction of \eqref{eq:alter_gale_eq}, and \eqref{eq:alter_gale_ineq} holds since it can be expanded as $[\bm{g}_D^T, \bm{g}_x^T, \bm{g}_w^T, \bm{0}]\bm{z} \le \hat{f}_{U,g} + |\tilde{D}_{U,e_2}|-\eta |\tilde{D}_{U,e_1}| < 0$, where {the} last inequality holds due to condition~1). Thus, {we must have $e_1\notin Q_f$} according to Lemma~\ref{lem:ground_alter_gale}.
\end{proof}
\vspace{-0.6em}
\emph{Remark:} While in theory such verifiable conditions can also be derived for links in larger cuts, the number of cases will grow exponentially. We also find $1$--$2$-edge cuts to cover the majority of links in practice (see Fig.~\ref{fig:verifytopology_H40}).

\subsubsection{Verification Algorithm}

Based on Lemmas~\ref{lem:certify_Ef_0}--\ref{lem:certify_Ef_2}, we develop an algorithm as shown in Algorithm~\ref{alg: Verification_Alg} for verifying the link states estimated by FLD, which can be applied to links in $1$--$2$-edge cuts. 
Here, $E_a$ denotes the set of all the links in $1$-edge cuts of $H$, while $\mathcal{E}_c$ denotes the set of $2$-edge cuts. In the algorithm, links in $E_a$ are tested before links in $\mathcal{E}_c$ since it is easier to extend the knowledge of $U_B$ based on the test results for $E_a$.
As for the complexity, {we} first note that the time complexity of each iteration is $\mathcal{O}(|E_H|+|V_H|)$  due to BFS. Then, it takes $\mathcal{O}(|E_H|)$ iterations to verify $E_a$ and $\mathcal{O}(|E_H|^2)$ iterations for $\mathcal{E}_c$, which results in a total complexity of $\mathcal{O}(|E_H|^2(|E_H|+|V_H|))$.

\begin{algorithm}\label{alg: Verification_Alg}
\SetAlgoLined
\SetKwFunction{Fmain}{FailEdgeDetection}
\SetKwInOut{Input}{input}\SetKwInOut{Output}{output}
\KwIn{$\tilde{\bfD}, \bfP, \bfDelta_{\bar{H}}, U_B, \eta, E_a, \mathcal{E}_c, \hat{F}$ 
}
\KwOut{$E_v$}
$E_v\leftarrow \emptyset$\tcc*{verifiable links}
\ForEach{$e = (u_1,u_2) \in E_a$}{
    Construct hyper-nodes $U_1$ and $U_2$ such that $E_{U_1} = E_{U_2} = \{e\}$\;
    \eIf{$e\in \hat{F}$}{
    Add $e$ to $E_v$ if $\min\{ \hat{f}_{U_1,g}, \hat{f}_{U_2,g} \} - \eta|\tilde{D}_{U_1,e}| <0$\;
    }
    {
    Add $e$ to $E_v$ if $\min\{ \hat{f}_{U_1,g}, \hat{f}_{U_2,g} \} + (\eta-1)|\tilde{D}_{U_1,e}| <0$\;
    }
    \If{$e$ is verified to be in $E_H \setminus F$}{
    Add $u_i$ to $U_B$ if $\Delta_{u_i}$ ($i=1,2$) can be recovered through Lemma~\ref{lem:recover delta}\;
    }
}
    \ForEach{$\{e_1, e_2\}\in \mathcal{E}_c$}{
        Construct hyper-nodes $U_1$ and $U_2$ such that $E_{U_1} = E_{U_2} = \{e_1, e_2\}$\;
        Test the satisfaction of Lemma~\ref{lem:certify_Ef_0}, \ref{lem:certify_Ef_1}, or \ref{lem:certify_Ef_2} for $U_1$ and $U_2$, respectively\;
        Add $e_i$ ($i=1,2$) to $E_v$ if it is verified\;
    }
\caption{Verification without Ground Truth}
\end{algorithm}

\subsection{Verification with Partial Knowledge of Ground Truth}

Algorithm~\ref{alg: Verification_Alg} assumes no knowledge of the ground truth {link states}, even if the states of some links are already verified. However, links that cannot be verified by Algorithm~\ref{alg: Verification_Alg} may become verifiable after obtaining partial knowledge of the ground truth (i.e., link set $E_v$ verified by Algorithm~\ref{alg: Verification_Alg}). In addition, links in larger cuts are not tested in Algorithm~\ref{alg: Verification_Alg}. To address these issues, we propose a followup step designed to verify the states of the links in $E_H\setminus E_v$.

\subsubsection{Verifiable Conditions}

The idea for verifying the correctness of $e\in \hat{F}$ (or $e\in E_H\setminus \hat{F}$) is to construct a solution to \eqref{eq:alter_gale} as if $e\in E_H\setminus F$ (or $e\in F$). Specifically, it can be shown that for a link $e\in \hat{F}$, if there exists $\bm{z}\geq \bm{0}$ for \eqref{eq:alter_gale} where $\bm{W}$ is constructed for $Q_f = \{e\}$ and $Q_m=\emptyset$, then $e$ is guaranteed to have failed since otherwise it must have been estimated to be operational. The challenge is the unknown $\bm{g}_D$, $\bm{g}_x$, and $\bm{g}_w$ due to unknown $F$ and $\bfDelta_H^*$. To tackle this challenge, we approximate these parameters by their worst possible values
(in terms of satisfying \eqref{eq:alter_gale}), which leads to the following result:

\begin{theorem}\label{theo: verify_by_LP}
Given a set $E_v$ of links with known states, we define $\hat{\bm{g}}_D \in \mathbb{R}^{2|V_H|}$ and $\hat{\bm{g}}_x\in \mathbb{R}^{2|E_H|}$ as follows:
\begin{align}
\hat{g}_{D,u} &=
\begin{cases}
{g_{D,u}}, \mbox{ if $u\in U_B$,} \\
\left| {{p_u}} \right|, \mbox{ otherwise, }
\end{cases}
~~\hat{g}_{x,e} &=
\begin{cases}
{g_{x,e}}, \mbox{ if $e\in E_v$, }\\
1, \mbox{   otherwise, }
\end{cases}\nonumber
\end{align}
and define $\hat{g}_{D,-u}$ and $\hat{g}_{x,-e}$ similarly. Then, a link $l\in \hat{F}$ is verified to have failed if there exists a solution $\bm{z} \ge \bm{0}$ to
\begin{subequations}\label{eq:computable_gale}
\begin{alignat}{2}
[\bm{A}_D^T, \bm{A}_x^T, \bm{w}^T, \bm{1}]\bm{z} = \bm{0},\label{eq: computable_gale_eq} \\
[\hat{\bm{g}}_D^T, \hat{\bm{g}}_x^T, g_w, \bm{0}]\bm{z} < 0,\label{eq: computable_gale_ineq}
\end{alignat}
\end{subequations}
where $\bm{w} \in \{0,1\}^{|E_H|}$ is defined to be $\bm{W}_f$ with $Q_f = \{l\}$, and $g_w := -\eta$. Similarly, a link $e \in E_H \setminus \hat{F}$ is verified to be operational if $\exists \bm{z} \ge \bm{0}$ that satisfies \eqref{eq:computable_gale}, where $\bm{w}\in \{0,1\}^{|E_H|}$ is defined to be $\bm{W}_m$ with $Q_m = \{e\}$, and $g_w := \eta -1$.
\end{theorem}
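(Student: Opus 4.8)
The plan is to prove the theorem by contradiction, reducing the claim to Lemma~\ref{lem:ground_alter_gale} through a monotonicity (entrywise domination) argument. To verify that $l\in\hat{F}$ is correct, I would suppose instead that $l$ is operational, i.e.\ $l\in E_H\setminus F$, and then show that the hypothesized solution $\bm{z}\ge\bm{0}$ to \eqref{eq:computable_gale} is in fact a valid solution to the true system \eqref{eq:alter_gale} with $Q_f=\{l\}$ and $Q_m=\emptyset$. By the false-alarm branch of Lemma~\ref{lem:ground_alter_gale}, this would guarantee that $l$ cannot be falsely detected, contradicting $l\in\hat{F}$; hence $l\in F$. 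The operational-link claim is entirely symmetric, using instead $Q_m=\{e\}$, $Q_f=\emptyset$, $g_w=\eta-1$, and the ``cannot-be-missed'' branch of the lemma.

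The key reduction is that the coefficient matrix in \eqref{eq:computable_gale} is identical to that of \eqref{eq:alter_gale} for the chosen $Q_f=\{l\}$ (it depends only on the observable $\tilde{\bfD}$, the before-attack load/generator partition of $V_H$, and the single chosen column $\bm{w}$), so only the cost vectors differ. First I would observe that, since $\bm{z}\ge\bm{0}$, whenever $\hat{\bm{g}}\ge\bm{g}^*$ holds entrywise (with $\bm{g}^*$ the true but unknown vector in the hypothesized operational world), we obtain $[\bm{g}_D^{*T},\bm{g}_x^{*T},g_w,\bm{0}]\bm{z}\le[\hat{\bm{g}}_D^T,\hat{\bm{g}}_x^T,g_w,\bm{0}]\bm{z}<0$. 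Thus any $\bm{z}$ satisfying the computable inequality in \eqref{eq:computable_gale} automatically satisfies the true inequality \eqref{eq:alter_gale_ineq}, while the equality \eqref{eq:alter_gale_eq} holds verbatim. This is precisely what is needed to invoke the lemma.

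The main work --- and the step I expect to be the principal obstacle --- is establishing the entrywise domination $\hat{\bm{g}}\ge\bm{g}^*$. For the coordinates where exact values are available this is immediate: $\hat{g}_{D,\pm u}=g_{D,\pm u}$ for $u\in U_B$ (recoverable via Lemma~\ref{lem:recover delta}), $\hat{g}_{x,\pm e}=g_{x,\pm e}$ for $e\in E_v$, and the $g_w$ entry equals the exact constant $-\eta$. For the unknown coordinates I would argue as follows. Since $x_e^*\in\{0,1\}$, both $x_e^*$ and $1-x_e^*$ are at most $1$, so $\hat{g}_{x,\pm e}=1\ge g_{x,\pm e}$; in particular this covers the tested link $l$ itself (which is not in $E_v$). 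For a node $u\notin U_B$, I would show $\hat{g}_{D,\pm u}=|p_u|\ge g_{D,\pm u}$ by a short case analysis on the sign of $p_u$: the load-shedding feasibility constraints \eqref{eq:const_valid_start}--\eqref{eq:const_valid_load} confine $\Delta_u^*$ to $[0,p_u]$ when $p_u>0$ and to $[p_u,0]$ when $p_u\le0$, which forces both $\Delta_u^*$ and $p_u'=p_u-\Delta_u^*$ to lie within $[0,|p_u|]$ in magnitude; since $g_{D,u}$ and $g_{D,-u}$ are each one of $\pm\Delta_u^*$ or $\pm p_u'$ (selected by the sign of $p_u$), every case yields a value bounded above by $|p_u|$. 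Once this domination is in hand, the contradiction via Lemma~\ref{lem:ground_alter_gale} closes both the failed-link and operational-link cases.
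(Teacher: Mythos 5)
Your proposal is correct and follows essentially the same route as the paper's proof: both reduce \eqref{eq:computable_gale} to \eqref{eq:alter_gale} by the entrywise domination $[\bm{g}_D^T,\bm{g}_x^T]\le[\hat{\bm{g}}_D^T,\hat{\bm{g}}_x^T]$ combined with $\bm{z}\ge\bm{0}$ (your case analysis bounding $\Delta_u^*$ and $p_u'$ via \eqref{eq:const_valid_start}--\eqref{eq:const_valid_load} just makes explicit what the paper leaves implicit), and then invoke Lemma~\ref{lem:ground_alter_gale} by contradiction to rule out $l\in Q_f$ (resp.\ $e\in Q_m$). The only step you gloss over is that Lemma~\ref{lem:ground_alter_gale} requires a solution for \emph{every} $Q_f$ containing $l$ (with the associated $Q_m$), not just $Q_f=\{l\}$, $Q_m=\emptyset$; the paper closes this in one line by padding your $\bm{z}$ with $z_{w,f,e'}=0$ for all $e'\in Q_f\setminus\{l\}$ (and likewise zeros for $Q_m$ entries), which leaves both \eqref{eq:alter_gale_eq} and \eqref{eq:alter_gale_ineq} unchanged.
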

\begin{proof}
We only prove for the case that $l \in \hat{F}$ since the case that $e\in E_H\setminus \hat{F}$ is similar. {First note that if $\exists \bm{z}_0\geq \bm{0}$ that satisfies \eqref{eq:alter_gale} for $\bm{W}$ constructed according to $Q_f = \{l\}$ and $Q_m = \emptyset$, then for any $\bm{W}$ corresponding to $Q_f$ that contains $l$, we can always construct a non-negative solution to \eqref{eq:alter_gale} based on $\bm{z}_0$ by setting $z_{w,f,e'} = 0, \forall e'\in Q_f\setminus \{l\}$.} Thus, according to Lemma~\ref{lem:ground_alter_gale}, $l$ {can be verified} as $l \in F$ if $\exists \bm{z}\ge \bm{0}$ for \eqref{eq:alter_gale} where $\bm{W}$ is constructed for $Q_f = \{l\}$ and $Q_m = \emptyset$, since otherwise $l$ must have been estimated to be operational.
%
%
Thus, we only need to prove that any solution to \eqref{eq:computable_gale} is a solution to \eqref{eq:alter_gale} when $Q_f=\{l\}$ and $Q_m=\emptyset$. 
To this end, let $\bar{\bm{z}} \ge \bm{0}$ be a feasible solution to \eqref{eq:computable_gale}. 
First, \eqref{eq:alter_gale_eq} holds since it is the same as \eqref{eq: computable_gale_eq} in this case. As for \eqref{eq:alter_gale_ineq}, we have
\begin{align}
[\bm{g}_D^T, \bm{g}_x^T, g_w, \bm{0}]\bar{\bm{z}} \le [\hat{\bm{g}}_D^T, \hat{\bm{g}}_x^T, g_w, \bm{0}]\bar{\bm{z}} < 0,
\end{align}
where the first inequality holds since $\bm{0}\le [\bm{g}_D^T, \bm{g}_x^T] \le [\hat{\bm{g}}_D^T, \hat{\bm{g}}_x^T]$ (element-wise inequality), while the second inequality holds since $\bar{\bm{z}}$ satisfies \eqref{eq:computable_gale}. Therefore, $\bar{\bm{z}}$ is also a feasible solution to \eqref{eq:alter_gale}, which verifies that $l \in F$.
\end{proof}

\subsubsection{Verification Algorithm}

All the elements in \eqref{eq:computable_gale} are known, and thus the existence of a solution can be checked by solving an LP. Based on this result, 
we propose Algorithm~\ref{alg: verify_LP} for verifying the estimated states of the remaining links, which iteratively updates $E_v$. {Each iteration of Algorithm~\ref{alg: verify_LP} involves solving $O(|E_H|)$ LPs, each of which has a time complexity that is polynomial\footnote{The exact order of the polynomial depends on the specific algorithm used to solve the LP~\cite{terlaky2013interior}. } in the number of decision variables ($|E_H|$) and the number of constraints ($|V_H|+|E_H|$)~\cite{terlaky2013interior}. Since Algorithm~\ref{alg: verify_LP} has at most $|E_H|$ iterations, the total time complexity of Algorithm~\ref{alg: verify_LP} is polynomial in $|E_H|$ and $|V_H|$.}

\begin{algorithm}\label{alg: verify_LP}
\SetAlgoLined
\SetKwFunction{Fmain}{FailEdgeDetection}
\SetKwInOut{Input}{input}\SetKwInOut{Output}{output}
\KwIn{$\tilde{\bfD}, \bfP, \bfDelta_{\bar{H}}, U_B, \eta,  E_H, E_v, \hat{F}, \hat{\bm{g}}_D, \hat{\bm{g}}_x$}
\While{$E_H\setminus E_v \ne \emptyset$}{
    $\bar{E}_v \leftarrow E_v$;\\
    \ForEach{$e\in E_H\setminus E_v$}{
        \If{$\exists \bm{z}\ge \bm{0}$ {satisfying} \eqref{eq:computable_gale} for $e$}{
        $\bar{E}_v \leftarrow \bar{E}_v \cup \{e\}$;\\
        Update $\hat{\bm{g}}_x$;
        }
    }
    \eIf{$|\bar{E}_v| > |E_v|$}{
    $E_v \leftarrow \bar{E}_v$;
    }{
    break\; 
    }
}
\caption{Verification with Partial Ground Truth}
\end{algorithm}

\subsection{Special Case of Connected Post-attack Grid}\label{sec: special_connected}

In this section, we study the special case that the grid is known to stay connected after the attack, which is assumed in most of the existing works \cite{Soltan18TCNS, Zhu12TPS, chen2014efficient}. In this case, FLD is modified by replacing constraints \eqref{eq:const_valid_start} and \eqref{eq:const_valid_load} with $\vDelta_H = \bm{0}$ (implied by the assumption of the connected post-attack grid). Next, we demonstrate how Algorithm~\ref{alg: Verification_Alg}-\ref{alg: verify_LP} will change in this case. To this end, we study the effect of $\vDelta_H = \bm{0}$ on Lemma~\ref{lem:ground_alter_gale}. Noting that according to \cite{Huang20arXiv}, any pair of $(\vDelta_H, \vx_H)$ satisfying \eqref{eq:pf_constraint} can be represented by $\vc\in \mathbb{R}^{|E_H|}$ as
\begin{align}
\vDelta_H = \vDelta_H^* + \tilde{\vD}_H\vc, \quad \vx_H  = \vx_H^* + \vI_{|E_H|}\vc.
\end{align}
Thus, we have $\tilde{\vD}_H\vc = \bm{0}$ due to $\vDelta_H = \vDelta_H^* = \bm{0}$, which is equivalent to requiring $\tilde{\vD}_H\vc \le \bm{0}$ and $-\tilde{\vD}_H\vc \le \bm{0}$.
Accordingly, $\vA_D$ and $\vg_D$ in \eqref{eq:alter_gale}, which used to model 
\eqref{eq:const_valid_start} and \eqref{eq:const_valid_load}, 
now become $\vA_D^T := [\tilde{\vD}_H^T, -\tilde{\vD}_H^T], \vg_D := \bm{0}$. 
The direct implication of $\vg_D = \bm{0}$ is that $f_{U,g} = \sum_{u\in U} f_{u,g} = 0, \forall U\subseteq V_H$. That is to say, Theorems~\ref{lem:certify_Ef_0}-\ref{lem:certify_Ef_2} still hold for the modified FLD except that $\hat{f}_{U,g} = \bm{0}$, which implies the following result:
\begin{corollary}\label{lem: No_mistake_1_cut}
If it is known that the post-attack grid $G' = (V,E\setminus F)$ is connected, then the state of any link that forms a 1-edge cut of $H$ will be identified correctly by a variation of FLD that replaces the constraints \eqref{eq:const_valid_start} and \eqref{eq:const_valid_load} by $\vDelta_H = \bm{0}$. 
\end{corollary}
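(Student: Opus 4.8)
The plan is to reduce the claim to the $1$-edge-cut verification result already proved, Corollary~\ref{coro:verifiable condition of 1-edge cut}, and to cash in the drastic simplification $\hat{f}_{U,g}=0$ that the connected-grid modification forces. First I would reuse the construction given just before Corollary~\ref{coro:verifiable condition of 1-edge cut}: since $e=(s,t)$ forms a $1$-edge cut of $H$, running BFS from $s$ and from $t$ in $(V_H,E_H\setminus\{e\})$ produces two hyper-nodes $U_1,U_2$ with $E_{U_1}=E_{U_2}=\{e\}$, so that $S_{U_1}=S_{U_2}=\emptyset$. Because $e$ is the only link leaving $U_1$, we have $\tilde{D}_{U_1,e}=\tilde{D}_{s,e}=(\theta_s'-\theta_t')/r_{st}$, which is nonzero by the standing assumption $\theta_s'\ne\theta_t'$ for links in $E_H$; hence $|\tilde{D}_{U_1,e}|>0$ (and likewise $|\tilde{D}_{U_2,e}|>0$).

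Next I would invoke the observation made in the paragraphs preceding the corollary: replacing constraints \eqref{eq:const_valid_start}--\eqref{eq:const_valid_load} by $\vDelta_H=\bm{0}$ turns $\vg_D$ into $\bm{0}$, so $f_{u,g}=0$ for every node and, since $\vDelta_H$ is now known to vanish everywhere (every node lies in $U_B$), $\hat{f}_{U,g}=0$ for every $U\subseteq V_H$. Because Corollary~\ref{coro:verifiable condition of 1-edge cut} rests only on Theorems~\ref{lem:no_Miss_hyper_node}--\ref{lem:no_fa_hyper_direc}, which remain valid under this modification, I can apply it with $\hat{f}_{U_1,g}=\hat{f}_{U_2,g}=0$.

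The heart of the argument is then a two-case check on whatever FLD outputs for $e$. If $e\in\hat{F}$, the false-alarm verification inequality $\min\{\hat{f}_{U_1,g},\hat{f}_{U_2,g}\}-\eta|\tilde{D}_{U_1,e}|<0$ reduces to $-\eta|\tilde{D}_{U_1,e}|<0$, which holds since $\eta>0$ and $|\tilde{D}_{U_1,e}|>0$; thus $e\in F$ is verified. If $e\in E_H\setminus\hat{F}$, the miss verification inequality $\min\{\hat{f}_{U_1,g},\hat{f}_{U_2,g}\}+(\eta-1)|\tilde{D}_{U_1,e}|<0$ reduces to $(\eta-1)|\tilde{D}_{U_1,e}|<0$, which holds since $\eta<1$; thus $e\in E_H\setminus F$ is verified. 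As exactly one of the two cases occurs, the state of $e$ is confirmed to match $\hat{F}$, i.e., identified correctly.

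I expect the only step with real content to be the reduction $\hat{f}_{U,g}=0$ --- that is, checking that zeroing $\vg_D$ under $\vDelta_H=\bm{0}$ preserves the structure of Theorems~\ref{lem:no_Miss_hyper_node}--\ref{lem:no_fa_hyper_direc} and hence of Corollary~\ref{coro:verifiable condition of 1-edge cut}. Once that is granted, the remaining case analysis is automatic, because $\eta\in(0,1)$ together with $|\tilde{D}_{U_1,e}|>0$ makes both verification thresholds strictly negative. A minor point worth recording is that the sign convention for $\hat{f}$ flagged in the proof of Corollary~\ref{coro:verifiable condition of 1-edge cut} becomes vacuous here, since $\hat{f}_{U,g}=0$ no matter which state of $e$ is hypothesized.
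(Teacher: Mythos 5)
Your proposal is correct and follows essentially the same route as the paper's proof: both reduce the claim to the $1$-edge-cut verification of Corollary~\ref{coro:verifiable condition of 1-edge cut} (via Theorems~\ref{lem:no_Miss_hyper_node}--\ref{lem:no_fa_hyper_direc}) using the hyper-nodes $U_1,U_2$ with $E_{U_1}=E_{U_2}=\{e\}$, observe that the modification $\vDelta_H=\bm{0}$ forces $f_{U_i,g}=0$, and conclude that both verification inequalities hold automatically since $\eta\in(0,1)$ and $|\tilde{D}_{U_1,e}|>0$ by Assumption~3. Your additional remark that the sign convention for $\hat{f}_{U,g}$ becomes vacuous here is a fair, if implicit, refinement of the paper's one-line argument.
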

\begin{proof}
As in the proof of Corollary~\ref{coro:verifiable condition of 1-edge cut}, for any link $e= (u_1, u_2)\in\hat{F}$ forming a cut of $H$, we can verify that $e\in F$ if $\min\{ {f}_{U_1,g}, {f}_{U_2,g} \} - \eta|\tilde{D}_{U_1,e}| <0$ (otherwise, $e$ must have been estimated as operational by  Theorem~\ref{lem:no_fa_hyper_direc}). Since ${f}_{U_i,g} = 0$ ($i=1,2$) if the grid remains connected after the attack and $|\tilde{D}_{U_1,e}|>0$ by Assumption~3, 
$e\in F$ can always be verified. Similar argument applies to any link $l\in E_H\setminus \hat{F}$.
\end{proof}

By Corollary~\ref{lem: No_mistake_1_cut}, the verification of the link states in $E_a$ 
can be skipped 
if the post-attack grid is known to stay connected.

\section{Performance Evaluation}\label{sec:Performance Evaluation}

We {first} test our solutions on the Polish power grid (``Polish system - winter 1999-2000 peak'') \cite{zimmerman2019matpower} 
with $2383$ nodes and $2886$ links, 
where parallel links are combined into one link. We generate the attacked area $H$ by randomly choosing one node as a starting point and performing a breadth first search to obtain $H$ with a predetermined $|V_H|$. 
We then randomly choose $|F|$ links within $H$ to fail.
{The generated $H$ consists of buses topologically close to each other, which will intuitively share communication links in connecting to the control center and can thus be blocked together once a cyber attack jams some of these links. Note, however, that our solution does not depend on this specific way of forming $H$.}
The phase angles of each island without any generator or load are set to $0$, and the rest are computed according to \eqref{eq:B theta = p}. %
For each setting of $|V_H|$ and $|F|$, we generate $300$ different $H$'s and $70$ different $F$'s per $H$. Each evaluated metric is shown via the mean and the $25^{\mbox{\small th}}$/$75^{\mbox{\small th}}$ percentile (indicated by the error bars). The threshold $\eta$ is set as $0.5$.\looseness=-1 


We first evaluate the fraction of verifiable links in $E_a$ (links in 1-edge cuts) and $E_c$ (links in 2-edge cuts, i.e., $E_c:=\bigcup_{s\in \mathcal{E}_c}s$), as shown in Fig.~\ref{fig:verifytopology_H40}. For each generated case (combination of $H$ and $F$), denote $E_{a,v}:=E_a\cap E_v$ and ${E}_{c,v}:={E}_{c}\cap E_v$. Then in Fig.~\ref{fig:verifytopology_H40}(a), we evaluate the fractions of testable and verifiable links in $E_a$ (${E}_{c}$) for failed links, i.e., $\frac{|E_{a}\cap F|}{|F|}$ ($\frac{|{E}_{c}\cap F|}{|F|}$) and $\frac{|E_{a,v}\cap F|}{|F|}$ ($\frac{|{E}_{c,v}\cap F|}{|F|}$).
The evaluation for operational links is conducted similarly in Fig.~\ref{fig:verifytopology_H40}(b). As can be seen, (\romannumeral1) the fractions of testable and verifiable links both stay almost constant with varying $|F|$, which demonstrates the robustness of Algorithm~\ref{alg: Verification_Alg}; (\romannumeral2) among the testable links ($E_a\cup {E}_c$), most of the failed links are verifiable, but only half of the operational links are verifiable; (\romannumeral3) compared to links in ${E}_{c}$, links in $E_a$ have a higher chance of being verifiable, which
indicates that it is easier to recover the states of the critical links in the attacked area (that form 1-edge cuts).

\begin{figure}
\begin{minipage}{.495\linewidth}\label{mini:topology_Ef}
  \centerline{
  \includegraphics[width=1\columnwidth]{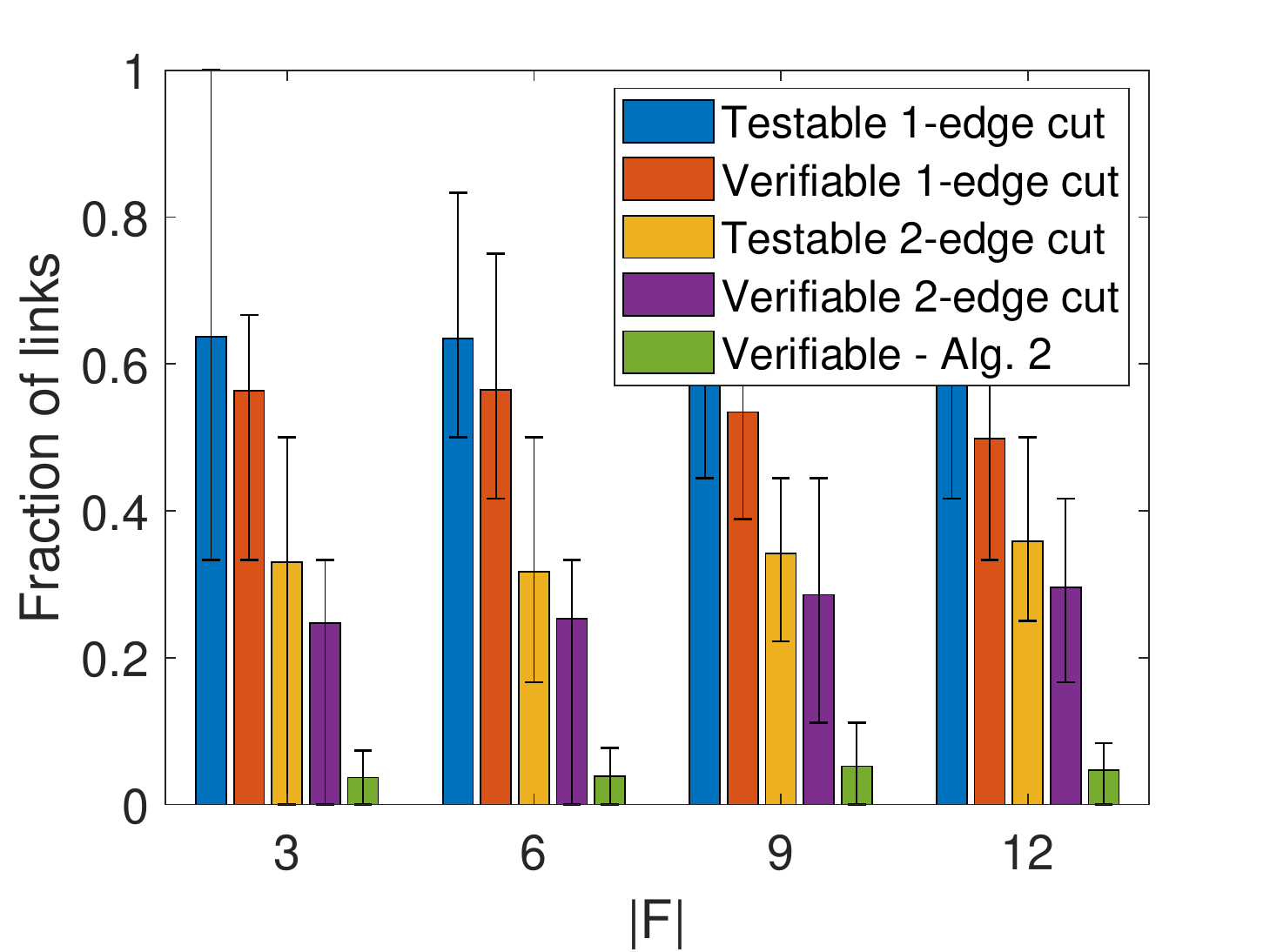}}
  \vspace{-.05em}
  \centerline{\small (a) Fraction of failed links}
\end{minipage}\hfill
\begin{minipage}{.495\linewidth}\label{mini:topology_E2}
 \centerline{
  \includegraphics[width=1\columnwidth]{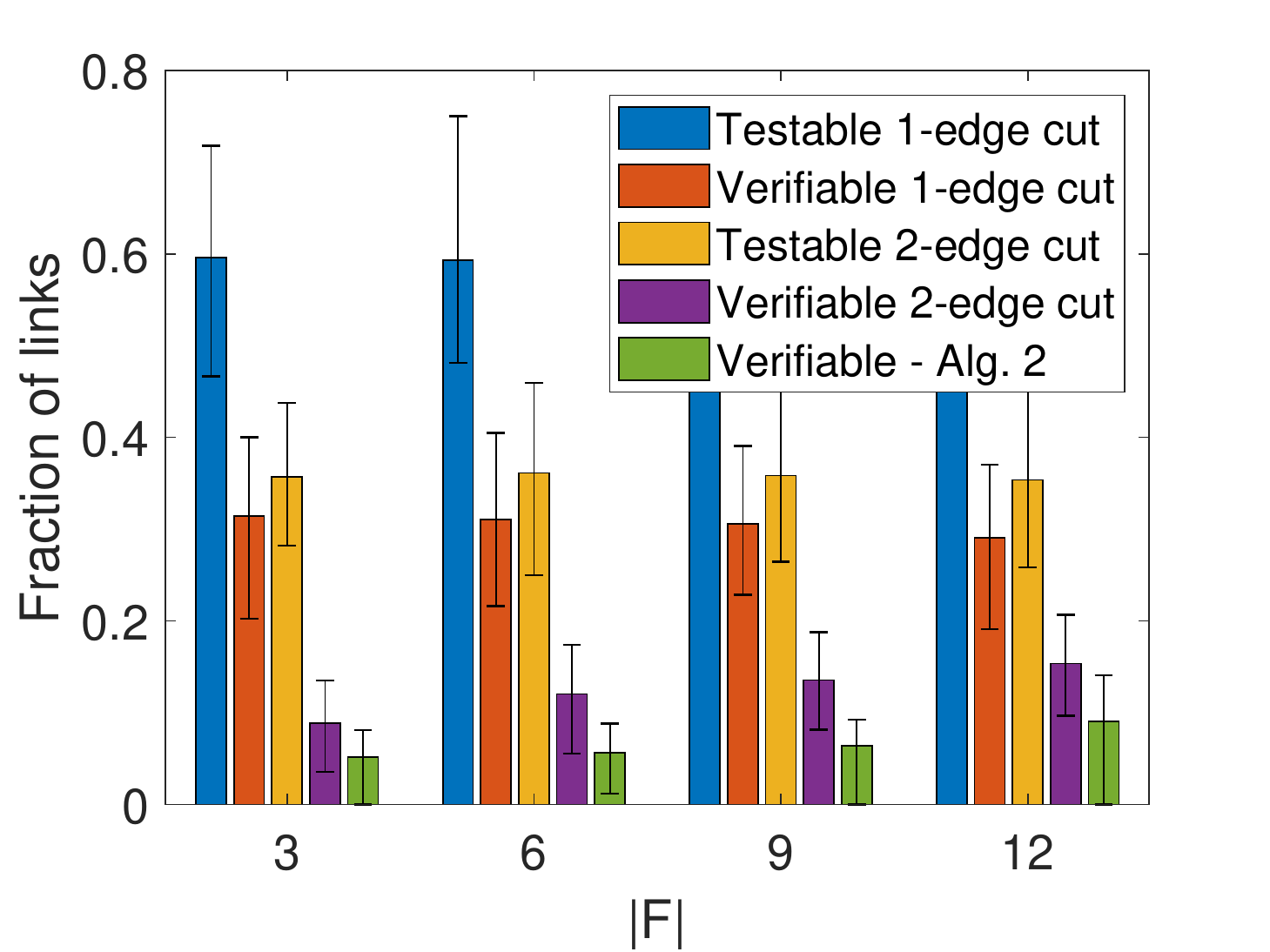}}
  \vspace{-.05em}
  \centerline{\small (b) Fraction of operational links}
\end{minipage}
  \caption{Fraction of testable/verifiable links in Polish system 
  ($|V_H|=40$).}
  \label{fig:verifytopology_H40}
    \vspace{-1em}
\end{figure}

\begin{table}[tb]
\footnotesize
\renewcommand{\arraystretch}{1.3}
\caption{Percentage of cases that Algorithm~\ref{alg: verify_LP} verifies additional links in Polish system} \label{tab:frac_alg2_effective}
\centering
\begin{tabular}{c|c|c|c|c}
  \hline
  Type of links & $|F| = 3$& $|F| = 6$& $|F| = 9$& $|F| = 12$  \\
  \hline
 Failed Links & $18.86\%$ & $31.94\%$ & $45.69\%$ & $54.42\%$ \\
  \hline
 Operational Links & $81.13\%$ & $84.24\%$ & $85.41\%$ & $85.69\%$ \\
  \hline
 All Links & $83.75\%$ & $88.23\%$ & $91.02\%$ & $91.48\%$ \\
  \hline
\end{tabular}
\end{table}

Next, we evaluate two metrics to study the value of Algorithm~\ref{alg: verify_LP}. The first is the fraction of links verified by Algorithm~\ref{alg: verify_LP} but not Algorithm~\ref{alg: Verification_Alg}, as shown in Fig.~\ref{fig:verifytopology_H40} as 'Verifiable - Alg.~2'. The second is the {percentage of cases} that Algorithm~\ref{alg: verify_LP} can verify additional links, given in Table~\ref{tab:frac_alg2_effective} for different $|F|$. We observe that Algorithm~\ref{alg: verify_LP} can usually verify more links based on the results of Algorithm~\ref{alg: Verification_Alg}, although the number of additionally verified links is not large.

\begin{figure}
\begin{minipage}{.495\linewidth}\label{subfig:Verify_EF_H40}
  \centerline{
  \includegraphics[width=1\columnwidth]{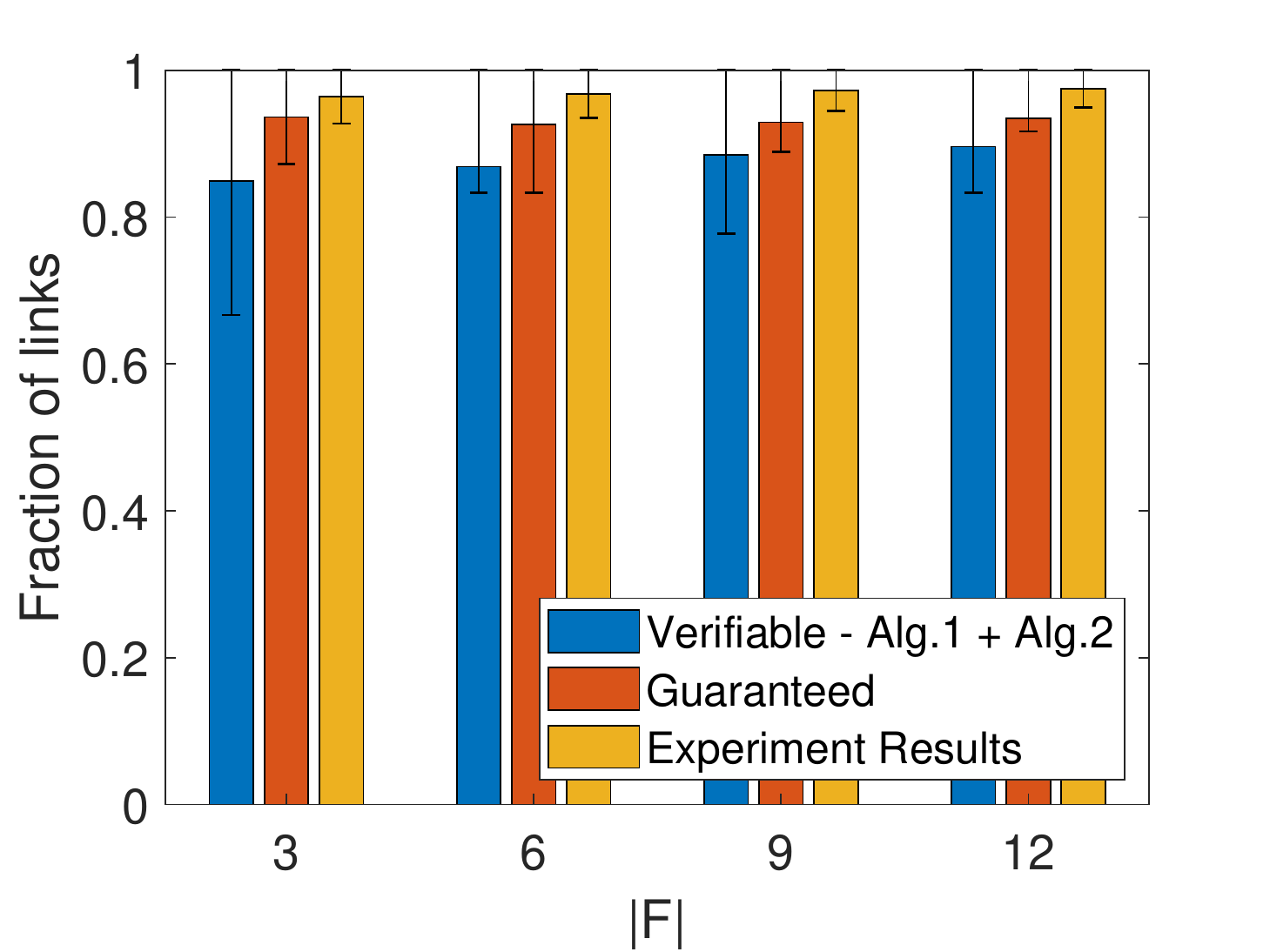}}
  \centerline{\small (a) Fraction of failed links.}
\end{minipage}\hfill
\begin{minipage}{.495\linewidth}\label{subfig:Verify_E2_H40}
 \centerline{
  \includegraphics[width=1\columnwidth]{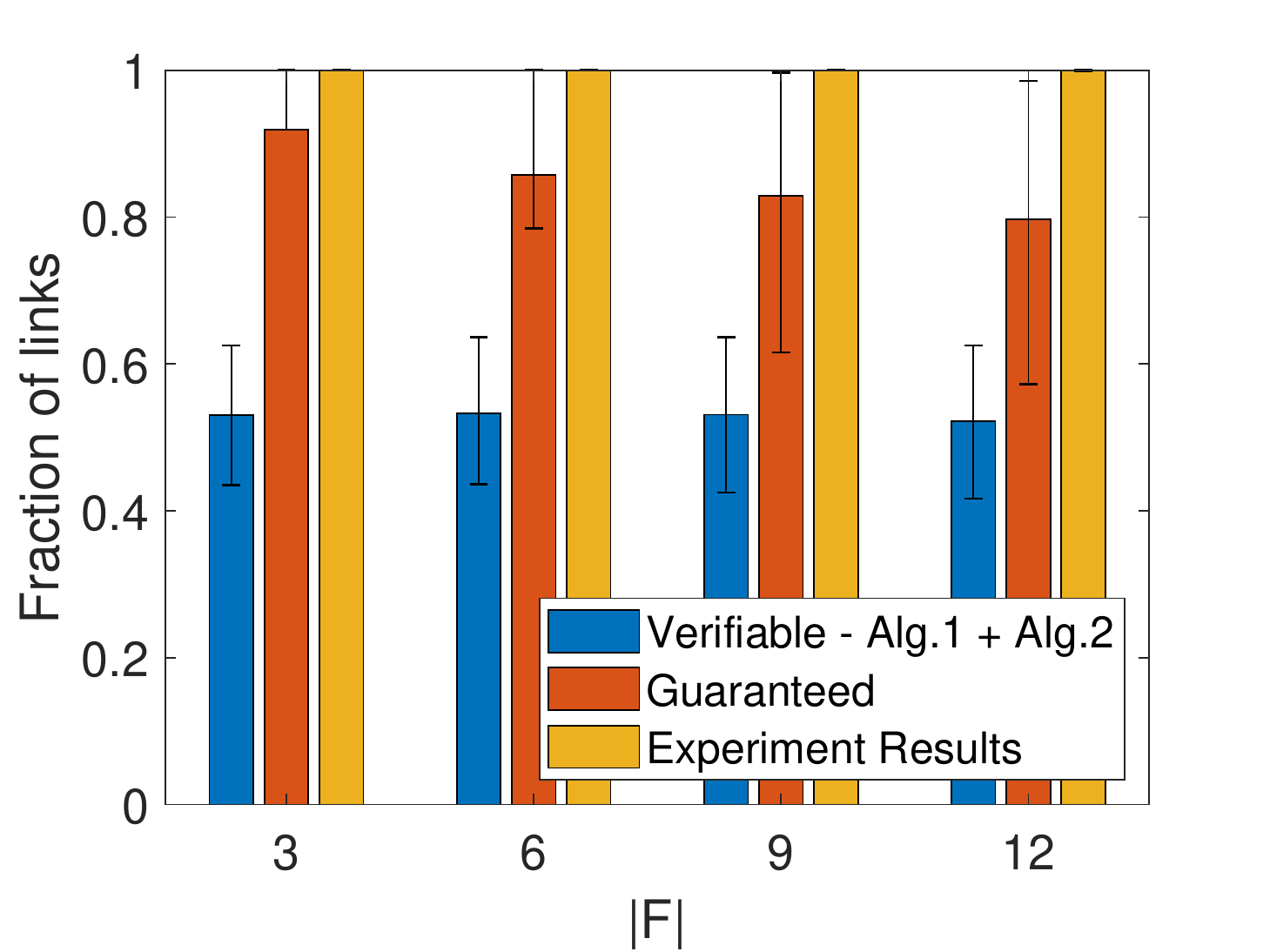}}
  \centerline{\small (b) Fraction of operational links.}
\end{minipage}
  \caption{Comparison between verifiable links, theoretically guaranteed links, and actually correctly identified links in Polish system ($|V_H|=40$).}
  \label{fig:verify_guarantee_general}
    \vspace{-1em}
\end{figure}

Then, we compare the fraction of \emph{verifiable links} with unknown ground truth {of $F$} to the fraction of links whose states are \emph{guaranteed} to be correctly estimated by FLD based on the ground truth $F$ according to Lemma~\ref{lem:ground_alter_gale} (`Guaranteed') and the \emph{actual} fraction of links whose states are correctly estimated by FLD (`Experiment Results'), as shown in Fig.~\ref{fig:verify_guarantee_general}. We see that most of the failed links are verifiable, while only half of the operational links are verifiable. This indicates that most (more than $90\%$) of the unverifiable links are operational. To understand such a phenomenon, we observe in experiments that many operational links carry small post-attack power flow, which makes the conditions in Theorem~\ref{lem:certify_Ef_0}-\ref{lem:certify_Ef_2} hard to satisfy. On the contrary, the values of hypothetical power flows on failed links are usually large. {Nevertheless, the fraction of links whose states are correctly identified by FLD is much higher: out of all the failed links, over $80\%$ will be estimated as failed and verified as so, while another $15\%$ will be estimated as failed but not verified; out of all the operational links, over $50\%$ will be estimated and verified as operational, while the rest will also be estimated as operational but not verified.}
\looseness=0


Finally, for the special case that the post-attack grid stays connected, we study the benefits of knowing the connectivity and the corresponding modification in Section~\ref{sec: special_connected}, as shown in Fig.~\ref{fig:verifyGuaranteeConnected_H40} and Table~\ref{tab:frac_connected_grid}. Specifically, `X-agnostic' denotes the performance of `X' without knowing the connectivity, while `X-known' denotes the counterpart that adopts the modification in Section~\ref{sec: special_connected}. The meaning of `X' is the same as in  Fig.~\ref{fig:verify_guarantee_general}. In Table~\ref{tab:frac_connected_grid}, we evaluate the percentage of randomly generated cases ($H$ and $F$) that the post-attack grid $G'$ remains connected. We observe that (\romannumeral1) the knowledge of connectivity can help verify more than $10\%$ additional failed links and $30\%$ additional operational links; (\romannumeral2) when $|F|$ is small (e.g., $|F|\le 3$), $G'$ remains connected in the majority of the cases. 
These results indicate the value of the knowledge of connectivity.


\begin{table}[tb]
\footnotesize
\renewcommand{\arraystretch}{1.3}
\caption{Percentage of cases of connected post-attack Polish system ($|V_H| = 40$)} \label{tab:frac_connected_grid}
\centering
\begin{tabular}{c|c|c|c}
  \hline
  $|F| = 3$& $|F| = 6$& $|F| = 9$& $|F| = 12$  \\
  \hline
  57.12$\%$& 26.33$\%$&  11.87$\%$ & 5.04$\%$  \\
  \hline
\end{tabular}
\end{table}

{To validate our observations, we further evaluate our solutions on the IEEE 300-bus system extracted from MATPOWER \cite{zimmerman2019matpower}, as shown in Fig.~\ref{fig:IEEE_verify_guarantee_general}--\ref{fig:IEEE_verify_guarantee_connected} and Table~\ref{tab:IEEE_frac_connected_grid}. The configuration of these experiments is the same as before, except that $|V_H| = 20$ due to the smaller scale of the test system. Compared with Fig.~\ref{fig:verify_guarantee_general}--\ref{fig:verifyGuaranteeConnected_H40} and Table~\ref{tab:frac_connected_grid}, all the results from the IEEE 300-bus system are qualitatively similar to those from the Polish system, and hence validate the generality of our previously observations.  }

\begin{figure}
\begin{minipage}{.495\linewidth}\label{mini:guarantee_ef}
  \centerline{
  \includegraphics[width=1\columnwidth]{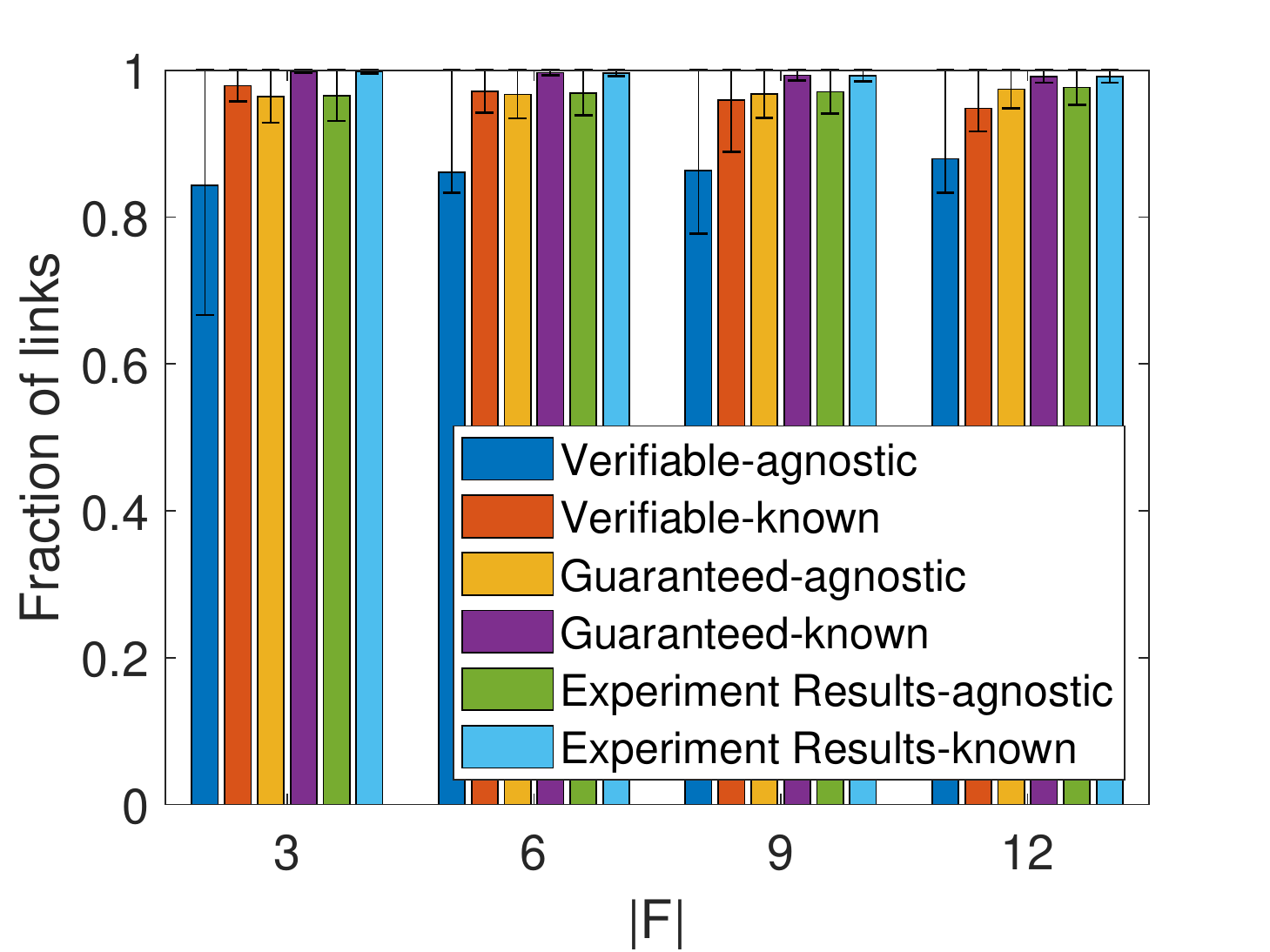}}
  \centerline{\small (a) Fraction of failed links.}
\end{minipage}\hfill
\begin{minipage}{.495\linewidth}
 \centerline{
  \includegraphics[width=1\columnwidth]{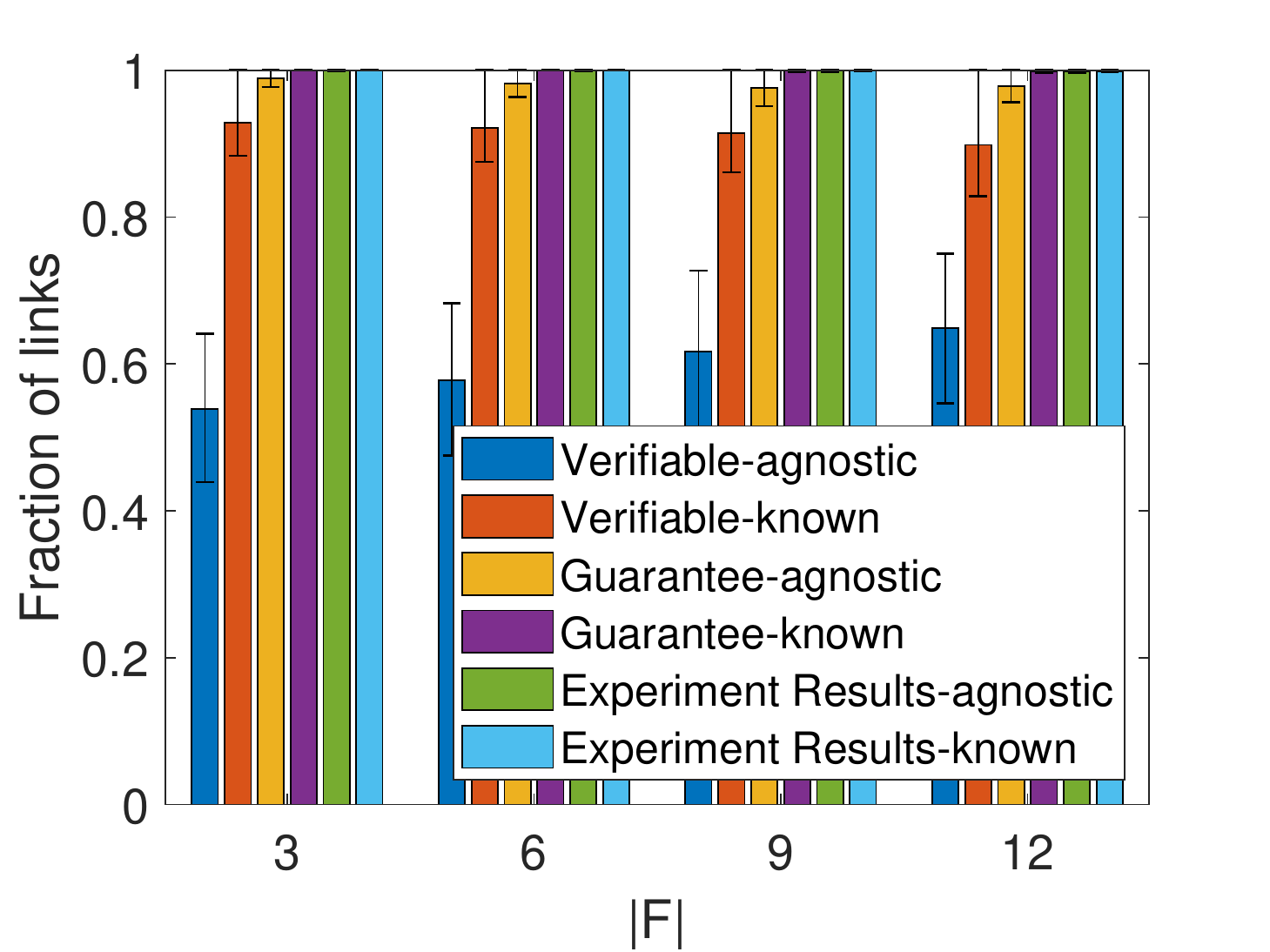}}
  \centerline{\small (b) Fraction of operational links.}
\end{minipage}
  \caption{Performance comparison for connected post-attack Polish system ($|V_H|=40$).}
  \label{fig:verifyGuaranteeConnected_H40}
    \vspace{-1em}
\end{figure}

\begin{figure}
\begin{minipage}{.495\linewidth}\label{subfig:IEEE_Verify_EF_H40}
  \centerline{
  \includegraphics[width=1\columnwidth]{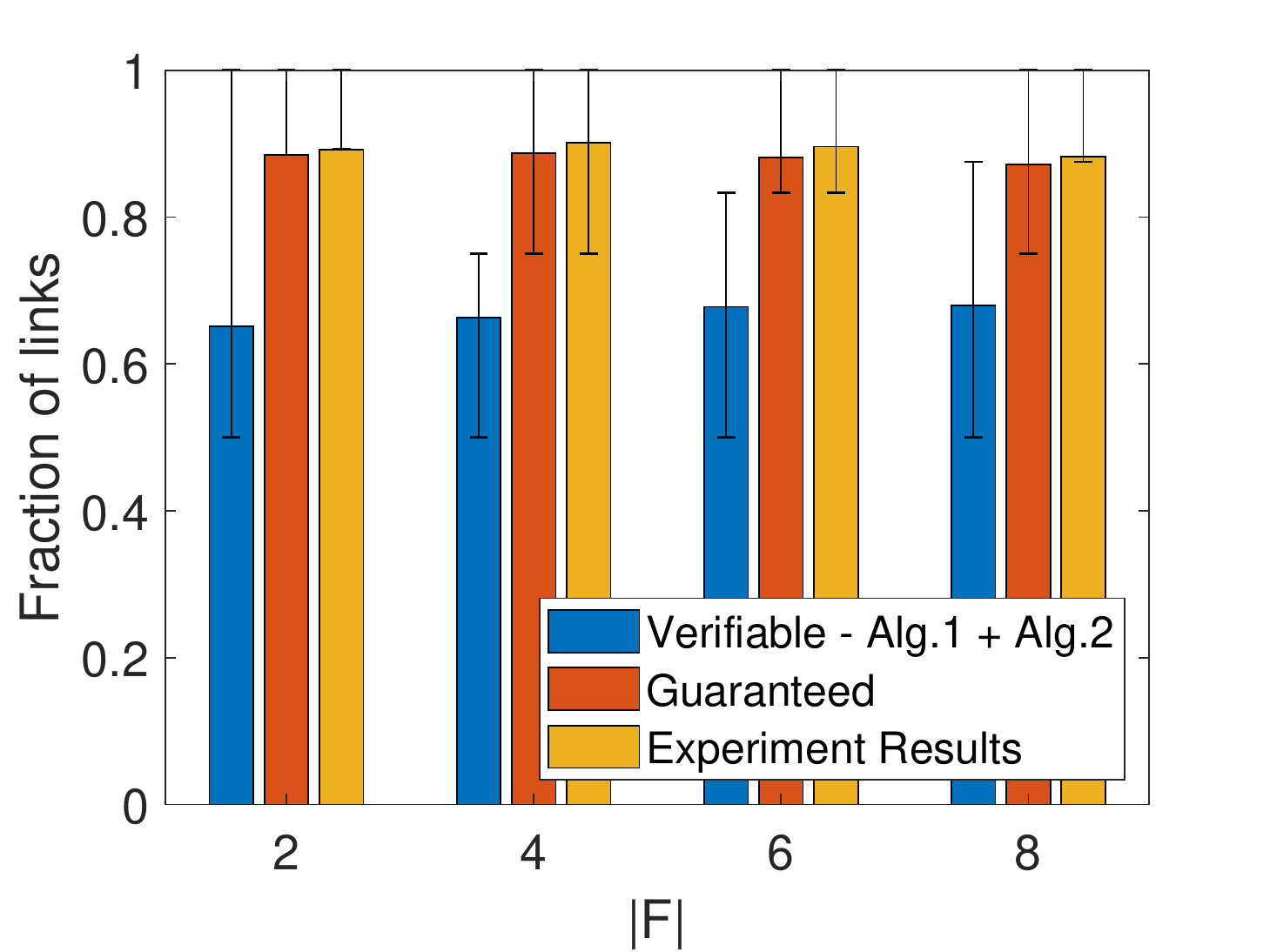}}
  \centerline{\small (a) Fraction of failed links.}
\end{minipage}\hfill
\begin{minipage}{.495\linewidth}\label{subfig:IEEE_Verify_E2_H40}
 \centerline{
  \includegraphics[width=1\columnwidth]{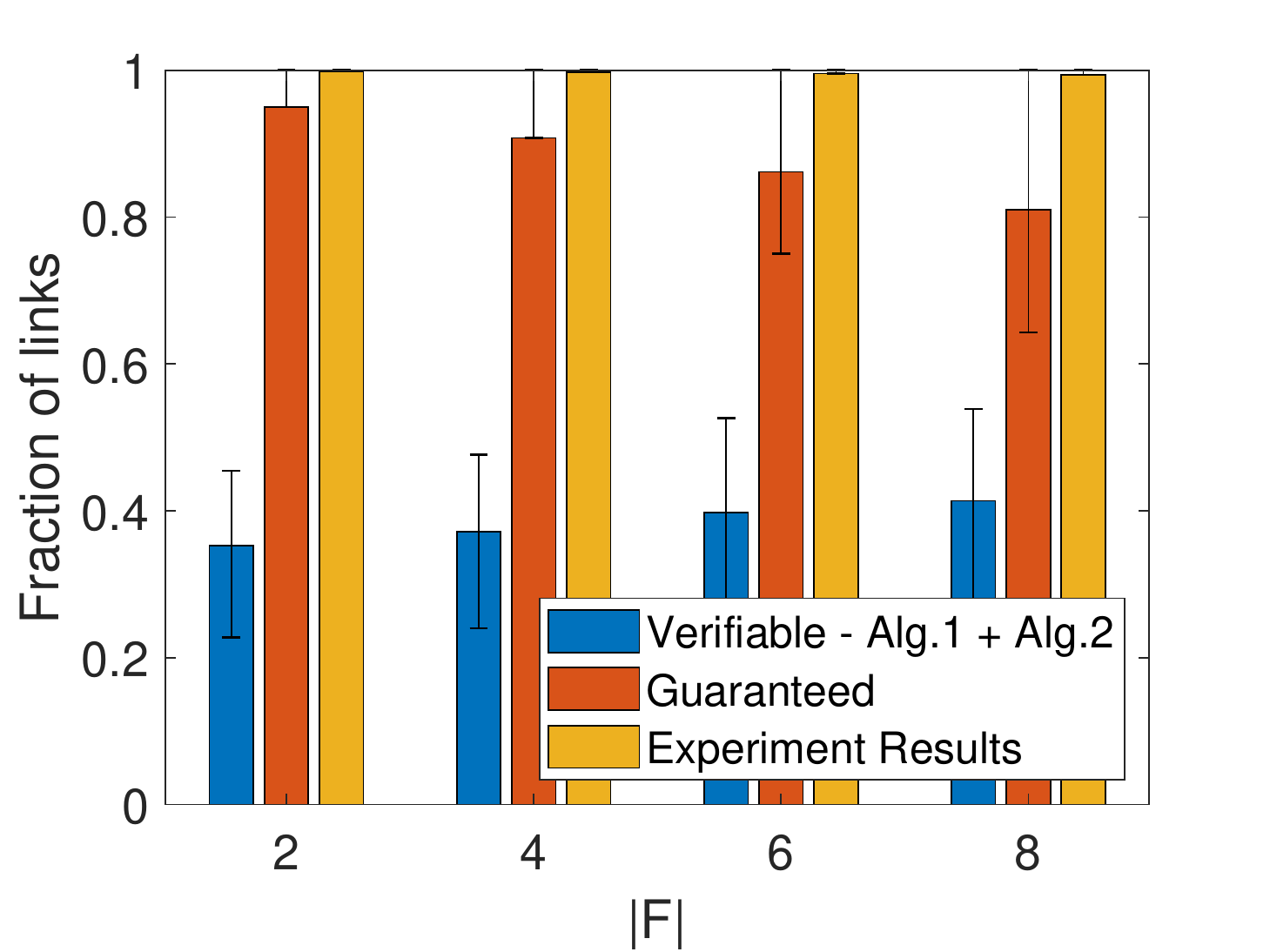}}
  \centerline{\small (b) Fraction of operational links.}
\end{minipage}
  {\caption{Comparison between verifiable links, theoretically guaranteed links, and actually correctly identified links in IEEE 300-bus system ($|V_H|=20$).}
  \label{fig:IEEE_verify_guarantee_general}}
\end{figure}

\begin{figure}
\begin{minipage}{.495\linewidth}\label{subfig:IEEE_VerifyConn_EF_H40}
  \centerline{
  \includegraphics[width=1\columnwidth]{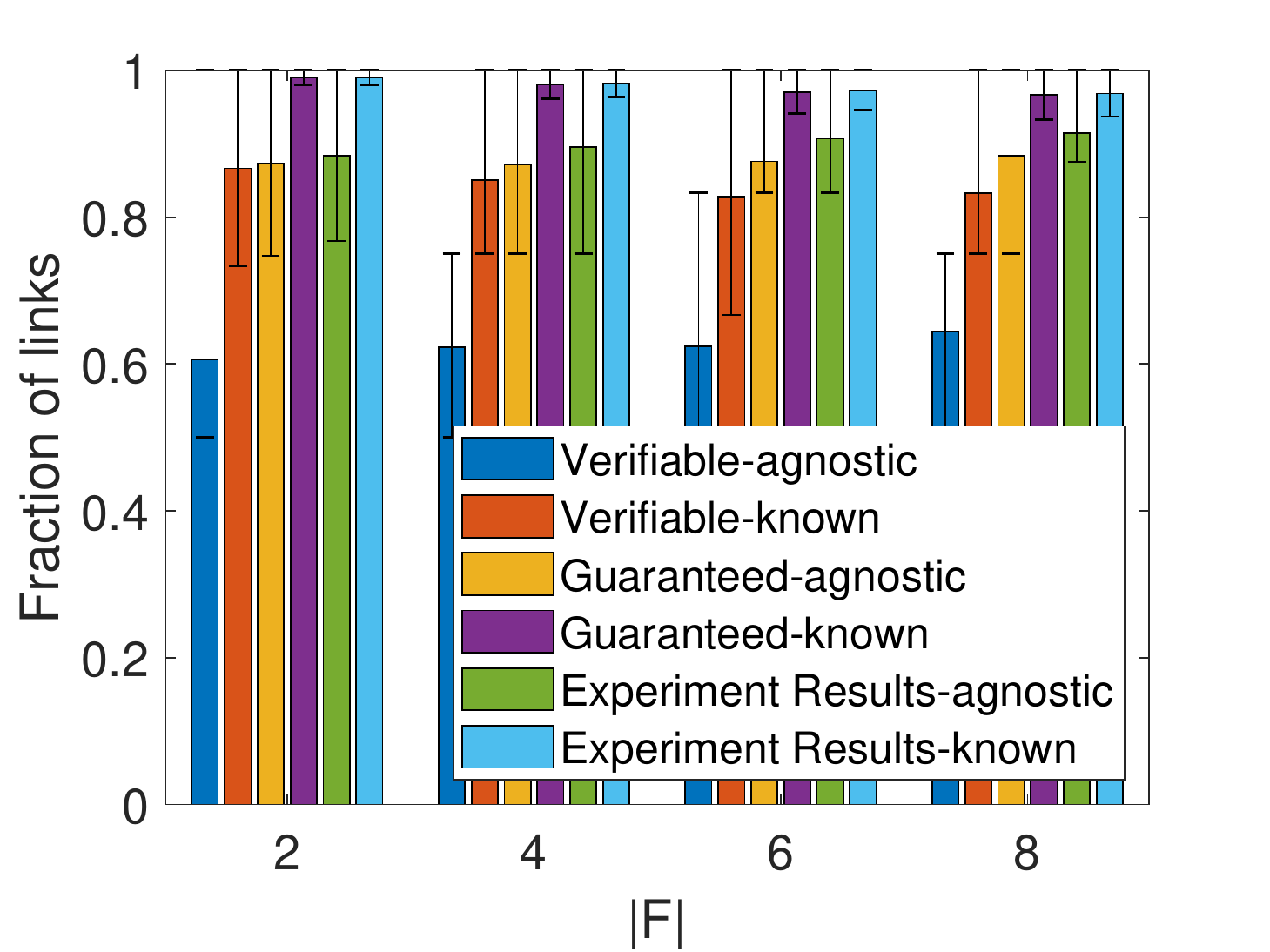}}
  \centerline{\small (a) Fraction of failed links.}
\end{minipage}\hfill
\begin{minipage}{.495\linewidth}\label{subfig:IEEE_VerifyConn_E2_H40}
 \centerline{
  \includegraphics[width=1\columnwidth]{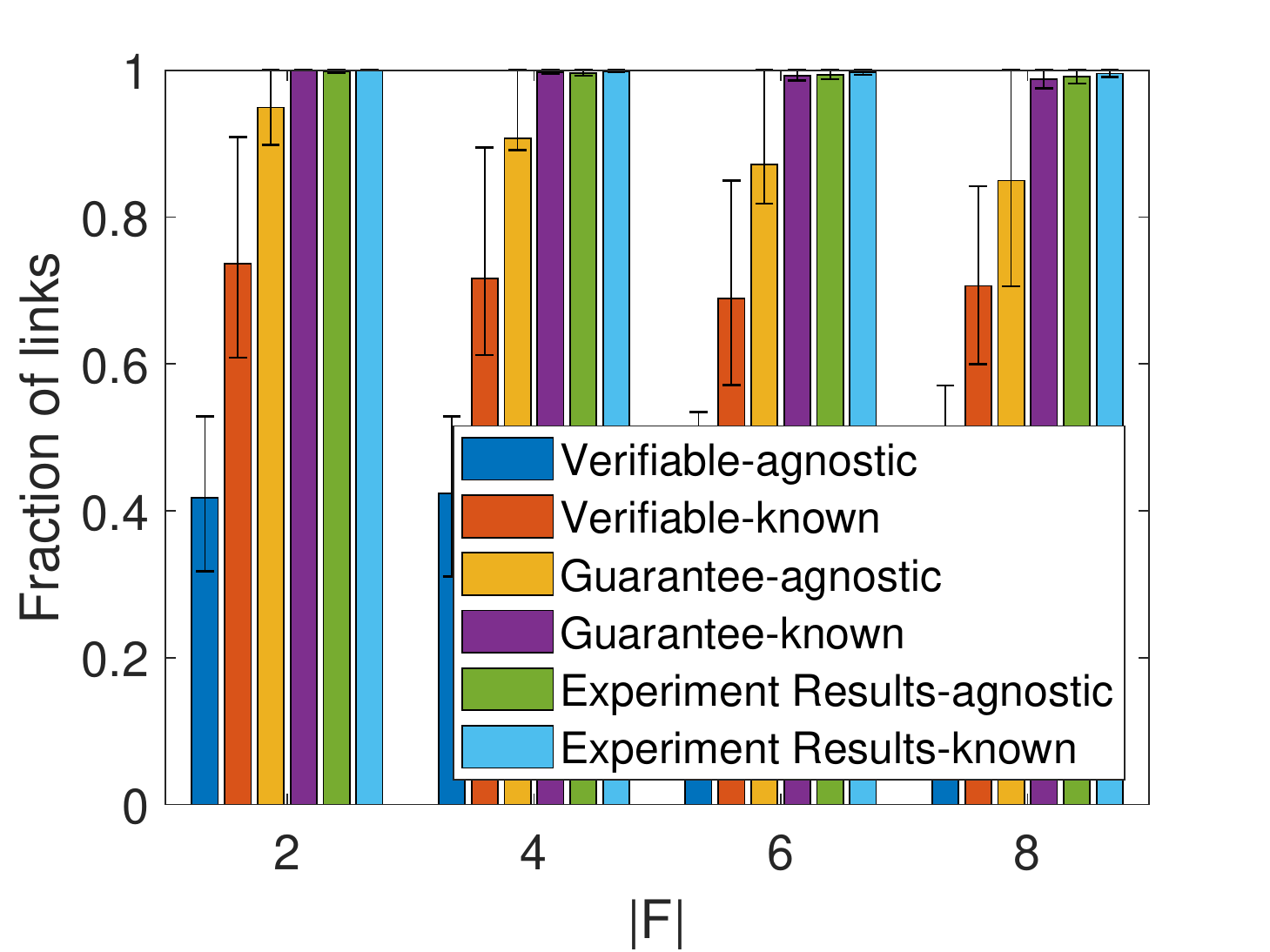}}
  \centerline{\small (b) Fraction of operational links.}
\end{minipage}
  {\caption{Performance comparison for connected post-attack IEEE 300-bus system ($|V_H|=20$). }
  \label{fig:IEEE_verify_guarantee_connected}}
\end{figure}

\begin{table}[tb]
\footnotesize
\renewcommand{\arraystretch}{1.3}
{\caption{Percentage of cases of connected post-attack IEEE 300-bus system ($|V_H|=20$)} \label{tab:IEEE_frac_connected_grid}}
\centering
\begin{tabular}{c|c|c|c}
  \hline
  $|F| = 2$& $|F| = 4$& $|F| = 6$& $|F| = 8$  \\
  \hline
  73.73$\%$& 51.10$\%$&  32.89$\%$ & 18.54$\%$  \\
  \hline
\end{tabular}
\end{table}


\section{Conclusion}\label{sec:Conclusion}

We considered the problem of localizing failed links in a smart grid under a cyber-physical attack that blocks sensor data from the attacked area and disconnects an unknown subset of links within this area that may disconnect the grid. Building on top of a recently proposed failure detection algorithm (FLD) that has shown empirical success, we focused on verifying the correctness of the estimated link states, by developing theoretical conditions that can be verified based on observable information and polynomial-time algorithms that use these conditions to verify link states. Our evaluations based on the Polish power grid showed that the proposed algorithms are highly successful in verifying the states of truly failed links. Compared to the previous solutions {(including \cite{Huang20arXiv})} for link state estimation that label links with binary states (failed/operational) without guaranteed correctness, our solution labels links with ternary states (failed/operational/unverifiable), where the states of verifiable links are identified with guaranteed correctness. This, together with the observation that most of the unverifiable links are operational, provides valuable information for planning repairs during the recovery process.


\bibliographystyle{IEEEtran}
\bibliography{myBib}

\end{document}